\newtheorem{definition}{Definition}
\newtheorem{lemma}[definition]{Lemma}
\newtheorem{theorem}[definition]{Theorem}
\newtheorem{proposition}[definition]{Proposition}
\theoremstyle{remark}
\newtheorem{remark}[definition]{Remark}
\algnewcommand\algorithmicinput{\textbf{Input:}}
\algnewcommand\Input{\item[\algorithmicinput]}
\algnewcommand\algorithmicoutput{\textbf{Output:}}
\algnewcommand\Output{\item[\algorithmicoutput]}
	\renewcommand\bra[1]{{\langle{#1}|}}
	\renewcommand\ket[1]{{|{#1}\rangle}}
\def\C{\mathbb{C}}
\def\D{\mathcal{D}}
\DeclareMathOperator{\tr}{Tr}
\DeclareMathOperator{\cut}{cut}
\def\E{\mathbb{E}}
\newcommand{\rel}{\mathrm{relax}}
\newcommand{\sgn}{\mathrm{sign}}
\newcommand{\M}{\mathcal{M}}
\begin{document}

\title{Approximate Solutions of Combinatorial Problems via Quantum Relaxations}

\author{Bryce Fuller}
\affiliation{IBM Quantum, IBM T.J. Watson Research Center, Yorktown Heights, NY 10598}
\author{Charles Hadfield}
\thanks{Previous affiliation: IBM Quantum, IBM T. J. Watson Research Center, Yorktown Heights, NY 10598}
\email{charles.hadfield@gmail.com}
\noaffiliation
\author{Jennifer R. Glick}
\affiliation{IBM Quantum, IBM T.J. Watson Research Center, Yorktown Heights, NY 10598}
\author{Takashi Imamichi}
\affiliation{IBM Quantum, IBM Research -- Tokyo, 19-21 Nihonbashi Hakozaki-cho, Chuo-ku, Tokyo, 103-8510, Japan}
\author{Toshinari Itoko}
\affiliation{IBM Quantum, IBM Research -- Tokyo, 19-21 Nihonbashi Hakozaki-cho, Chuo-ku, Tokyo, 103-8510, Japan}
\author{Richard J. Thompson}
\affiliation{Integrated Vehicle Systems, Boeing Research \& Technology, Huntsville, AL 35824}
\author{Yang Jiao}
\affiliation{Integrated Vehicle Systems, Boeing Research \& Technology, Tukwila, WA 98108}
\author{Marna M. Kagele}
\affiliation{Tech Vis and Integration, Global Technology, Boeing Research \& Technology, Tukwila, WA 98108}
\author{Adriana W. Blom-Schieber}
\affiliation{Product Development - Structures, Boeing Commercial Aircraft, Everett, WA 98204}
\author{Rudy Raymond}
\email{rudyhar@jp.ibm.com}
\affiliation{IBM Quantum, IBM Research -- Tokyo, 19-21 Nihonbashi Hakozaki-cho, Chuo-ku, Tokyo, 103-8510, Japan}
\author{Antonio Mezzacapo}
\email{mezzacapo@ibm.com}
\affiliation{IBM Quantum, IBM T.J. Watson Research Center, Yorktown Heights, NY 10598}
\begin{abstract}
Combinatorial problems are formulated to find optimal designs within a fixed set of constraints. They are commonly found across diverse engineering and scientific domains. Understanding how to best use quantum computers for combinatorial optimization is to date an open problem. Here we propose new methods for producing approximate solutions for the maximum cut problem and its weighted version, which are based on relaxations to local quantum Hamiltonians.
These relaxations are defined through commutative maps, which in turn are constructed borrowing ideas from quantum random access codes. We establish relations between the spectra of the relaxed Hamiltonians and optimal cuts of the original problems, via two quantum rounding protocols. The first one is based on projections to random magic states. It produces average cuts that approximate the optimal one by a factor of least $0.555$ or $0.625$, depending on the relaxation chosen, if given access to a quantum state with energy between the optimal classical cut and the maximal relaxed energy. The second rounding protocol is deterministic and it is based on estimation of Pauli observables.
The proposed quantum relaxations inherit memory compression from quantum random access codes, which allowed us to test the performances of the methods presented for 3-regular random graphs and a design problem motivated by industry for sizes up to 40 nodes, on superconducting quantum processors.
\end{abstract}


\maketitle

The idea that quantum computers can be used to generate approximate solutions for NP-hard combinatorial problems was proposed over two decades ago, using quantum adiabatic eigenstate evolution~\cite{farhi2000adiabatic}. It was then generalized to a quantum approximate optimization algorithm (QAOA), based on a variational optimization of quantum parameters~\cite{farhi2014qaoa,farhi2015quantum}.
The great interest in harnessing quantum advantage for classical combinatorial problems compelled many to perform extensive studies of the performance of the algorithm~\cite{zhou2020quantum,egger2020warm,tate2020bridging,nannicini2019performance, mcclean2020low}. These quantum optimizations rely on a bijective mapping between the space of classical binary variables and logical basis states of a collection of qubits, as illustrated in the original formulation~\cite{farhi2014qaoa}. 
Within the QAOA framework, the cost function to be optimized on a quantum computer has a classical maximal eigenstate, in the sense that superposition and entanglement are not strictly needed to prepare it. This is in contrast to quantum many-body Hamiltonians or quantum chemistry, whose extremal eigenstates often are very entangled. In the latter case, quantum computers have a natural memory advantage in storing the ground state. 

The algorithms presented here unify the two problems, in the sense that they produce approximate solutions of combinatorial problems searching for extremal eigenstates states of local quantum Hamiltonians. These local quantum Hamiltonians are relaxations of the original combinatorial problems; for each element in the image of a combinatorial cost function it is possible to construct a quantum state with the same Hamiltonian expectation value. 

\begin{figure*}
    \centering
    \includegraphics[width=0.95\linewidth]{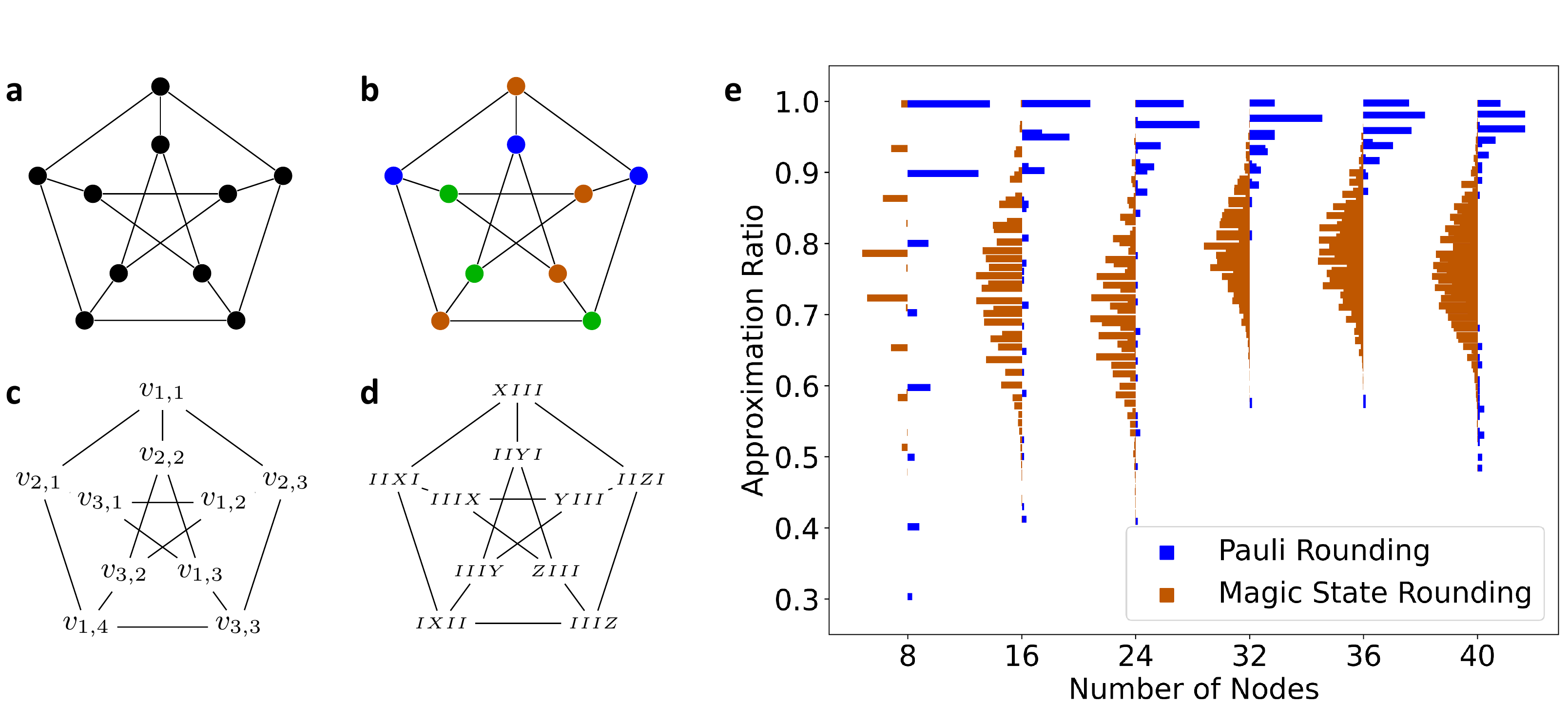}
    \caption{{\bf Hamiltonian Relaxation.}  The Petersen graph and the steps used to build the Hamiltonian relaxation: \textbf{(a)} A degree-3 graph with 10 vertices; \textbf{(b)} An LDF coloring of the graph; \textbf{(c)} Assignment of indices to colored vertices; \textbf{(d)} Assignment of Pauli operators to vertices. \textbf{(e)} Rounding of relaxed ground states for MaxCut problems on 100 3-regular random graphs for each size of 8, 16, 24, 32, 36, and 40 nodes. We show the approximation ratio obtained for each rounding.}
    \label{Figure1}
\end{figure*}

Once a relaxed Hamiltonian is constructed, a candidate ground state can be obtained with variational eigensolvers~\cite{peruzzo2014variational}, quantum phase estimation~\cite{kitaev1995quantum}, approximation algorithms for quantum Hamiltonians~\cite{anshu2021improved}, or quantum imaginary time evolution~\cite{motta2020determining}. While we use the variational approach here, we note that other ground state preparation procedures could lead to improved performances on fault-tolerant devices.  We focus in the following on the maximum cut (MaxCut) problem and weighted MaxCut, which are NP-hard problems. Although we particularize our analysis to MaxCut, we believe that the ideas presented here can be readily generalized to any combinatorial problem. Once a quantum relaxed state is prepared, we round its relaxed energy to an admissible cut value via quantum measurements, following in spirit the well-known roundings used in classical algorithms, such as the semidefinite programming relaxations introduced by Goemans and Williamson~\cite{gw1995}. 

We start by recalling the the MaxCut problem. Consider a graph $G=(V,E)$ with $|V|$ vertices, labelled $\{v_i\}$, and $|E|$ edges, labelled $\{e_{i,j}\}$, ${i,j}\in[|V|]$. The degree of the graph is $\deg(G)$. We let MaxCut be the problem
\begin{equation}\label{eqn:cut}
    \max_{m\in\{-1,1\}^{|V|}}
    \cut (m)
    =
    \max_{m\in\{-1,1\}^{|V|}}\sum_{e_{i,j}\in E} \frac12 (1 - m_i m_j)
\end{equation}
We let $m^*$ denote a configuration that maximizes $\cut (m)$. Then
our quantum Hamiltonian relaxation of the MaxCut problem in Eq.~(\ref{eqn:cut}) can be formulated through the following proposition
\begin{proposition}
Given a graph $G=(V,E)$, there exists an embedding
\begin{equation}
    F: \{-1, 1\}^{|V|} \to \D(\C^{2^n})
\end{equation}
and a Hamiltonian $H$ on $n$ qubits that, in expectation, commutes with the MaxCut function $\cut$. That is, for all $m \in \{-1,1\}^{|V|}$, we have
\begin{equation}\label{eqn:HFcut}
    \tr\left( H \cdot F (m) \right) = \cut(m).
\end{equation}
The Hamiltonian is composed of (up to a multiple of the identity operator) a linear combination of Pauli operators of weight 2.
Moreover, the construction of $H$ and $F$ may be performed in time complexity $\tilde O (|V|)$.
\end{proposition}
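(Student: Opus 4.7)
The plan is to combine a proper vertex coloring of $G$ with a magic-state embedding borrowed from the $(3,1,p)$-quantum random access code: the coloring guarantees that adjacent vertices land on distinct qubits, which in turn lets each edge term of the Hamiltonian be expressed as a product of two single-qubit Paulis.

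First I would compute a proper vertex coloring of $G$, for instance via the LDF heuristic suggested by Fig.~\ref{Figure1}, in time $O(|V|+|E|)$. Within each color class I arbitrarily partition the vertices into groups of size at most three, assign each group to a fresh qubit, and label the vertices inside a group with distinct Pauli operators from $\{X,Y,Z\}$. Writing $q(i)$ for the qubit carrying $v_i$ and $P_i$ for its assigned Pauli, the coloring property immediately gives $q(i)\neq q(j)$ whenever $e_{ij}\in E$.

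Next, I would define the embedding qubit by qubit via the standard QRAC magic state
\[
    \rho_q(m) = \tfrac{1}{2}\Bigl(I + \tfrac{1}{\sqrt{3}} \sum_{i:\, q(i)=q} m_i\, P_i\Bigr),
\]
and set $F(m) = \bigotimes_q \rho_q(m)$. The Hamiltonian would be
\[
    H = \sum_{e_{ij} \in E} \tfrac{1}{2}\bigl(I - 3\, P_i P_j\bigr),
\]
a linear combination of weight-$2$ Paulis plus a multiple of the identity. Eq.~(\ref{eqn:HFcut}) then follows from the tensor-product structure of $F$: since $q(i)\neq q(j)$, the trace factorizes across the two qubits, and the QRAC expectation $\tr[P_i\,\rho_{q(i)}(m)]=m_i/\sqrt{3}$ yields $\tr[P_i P_j\, F(m)] = m_i m_j / 3$, so each edge contributes $\tfrac{1}{2}(1-m_i m_j)$, summing to $\cut(m)$.

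The main technical nuisance, rather than a conceptual obstacle, is ensuring that every $\rho_q(m)$ is a valid density matrix when a color class size is not divisible by three; this can be handled uniformly either by padding with dummy isolated vertices or by employing the smaller $(2,1)$- and $(1,1)$-QRAC states with correspondingly modified prefactors on the incident edge terms. The complexity bound is then immediate: coloring, qubit and Pauli assignment, and edge enumeration each take time $O(|V|+|E|)$, which is $\tilde O(|V|)$ for the bounded-degree instances of interest.
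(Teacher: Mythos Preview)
Your proposal is correct and follows essentially the same construction as the paper: LDF coloring, packing three same-colored vertices per qubit, the $(3,1,p)$-QRAC product state as $F$, the Hamiltonian $H=\sum_e\tfrac12(I-3P_iP_j)$, and the key identity $\tr[P_iP_j\,F(m)]=m_im_j/3$. The only cosmetic difference is that the paper resolves the $3\nmid|V_c|$ case by padding with dummy variables fixed to $+1$ (so every $\rho_q$ is pure), whereas your displayed $\rho_q$ is already a valid, merely mixed, density matrix when fewer than three vertices occupy a qubit---so the ``nuisance'' you flag is in fact automatically handled by your own formula, and no modified prefactors are needed.
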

To construct $H$ we first briefly summarize quantum random access codes~\cite{ambainis1999dense,ANTV02}.
A quantum random access code (QRAC) allows more than one binary variable to be encoded per qubit at the expense of retrieval of each binary-variable value becoming probabilistic. The optimization of these two parameters is constrained by Nayak's bound~\cite{Nayak99} that extends Holevo's bound~\cite{Holevo1973}.
We consider the $(3,1,p)$ QRAC that encodes three binary variables $m = \{ m_i \}_{i\in[3]}$ into one qubit $\D(\C^2)$
\begin{equation}\label{eqn:31encoding}
    f(m)
    =
    \tfrac12
                \left(
                    I + \tfrac1{\sqrt3}  \left(  
                                            m_1 X + m_2 Y + m_3 Z
                                        \right)
                \right)
\end{equation}
such that, with probability $p=\frac12+\frac1{2\sqrt3}$, a given binary variable may be recovered upon measuring in the corresponding Pauli basis~\cite{Hayashi2006}. For example, $\tr(X \cdot f(m)) = \frac1{\sqrt3} m_1$, hence, if we write $\pi_X^+$ to be projection onto the $+1$ eigenspace of $X$, then $\tr(\pi_X^+ \cdot f(+1, m_2, m_3)) = \frac12+\frac1{2\sqrt3}$. Here we are not interested in this probabilistic retrieval of classical information in the original scope of QRACs, but rather to use them as tools to build our relaxations.

The next step is a coloring of the graph $G$. 
The MaxCut graph $G$ can be colored with the large-degree-first (LDF) method~\cite{WelshPowell67}. LDF finds a coloring $\{V_c\}_{c\in[C]}$ of the vertices $V$ into $C$ colors such that if $e_{i,j}\in E$ then the vertices $v_i, v_j$ are associated with different partitions $V_{c(i)}, V_{c(j)}$. In a compact form: $e_{i,j}\in E \implies c(i)\neq c(j)$.
LDF has time-complexity $O(|V|\log(|V|) + \deg(G)|V|)$

and performs such that the total number of colors satisfies $C\le \deg(G)+1$, see Fig.~\ref{Figure1}a and~\ref{Figure1}b for an example of coloring on the Petersen graph. This coloring operation dictates the runtime of the embedding in Proposition 1.
We then associate to each color $n_c$ qubits such that 

$n_c = \left\lceil |V_c| / 3 \right\rceil$ and, to each vertex of a given color, we associate a weight-1 Pauli operator that is supported on the respective $n_c$ qubits. The total number of qubits used here is reduced up to a factor 1/3 with respect to the standard QAOA formulation. The memory saving aspect alone is shared with previous compact encodings~\cite{tan2020qubit,patti2021nonlinear}. 
Note that up to two of the possible $3n_c$ Pauli operators per color may not be assigned. We build the embedding with the $(3,1,p)$ QRAC of Eq.(\ref{eqn:31encoding}) in mind, but other encodings can be used. For example, the case of the $(2,1,0.85)$ QRAC is derived in the Supplementary Information.

Once every set of $n_c$ qubits is associated to a set of vertices $V_c$, we can construct the relaxed Hamiltonian.  
To each edge $e$ with associated vertices $v_{c,i},v_{c',i'}$ (note that $c\neq c'$), we declare the weight-2 Pauli operator $O_e = P_{c,i}P_{c',i'}$, where we write $P_{c,i}$ for the Pauli operator associated with a vertex $v_{c,i}$, where $i\in[|V_c|]$. 
See Figure~\ref{Figure1}d.
We are now in the position to define the following Hamiltonian:
\begin{equation}\label{eqn:H}
    H
    =
    \sum_{e\in E}
    \frac12
    \left(
        I 
        - 3\cdot
        O_e
    \right),
\end{equation}
which acts on $\sum_c n_c$ qubits.

It remains to make explicit the embedding $F$. Consider a vertex $v_{c,i}$ and the associated variable $m_{c,i}$. There are at most two other vertices $v_{c,j}$, $v_{c,k}$ (with associated variables $m_{c,j},m_{c,k}$) such that the Pauli operators $P_{c,i},P_{c,j},P_{c,k}$ are all supported on the same qubit. Index this qubit $q$. (If less than two such vertices exist, because $3\nmid|V_c|$, then introduce dummy variables $m_{c,j}, m_{c,k}$ as required, and fix their values to be $+1$.) The values of the variables $m_{c,i},m_{c,j},m_{c,k}$ are then encoded into the qubit indexed $q$ using the function $f$ from Eq.~\eqref{eqn:31encoding}. Therefore 
\begin{equation}\label{eqn:f_for_F}
    \tr( P_{c,i} \cdot f(m_{c,i},m_{c,j},m_{c,k}))
    =
    \tfrac{ 1 }{ \sqrt3 }
    m_{c,i}
\end{equation}
This procedure, applied to all vertices, provides the construction of the embedding $F$.

We obtain the commutation of Eq.~\eqref{eqn:HFcut} by combining Eq.~\eqref{eqn:cut} and Eq.~\eqref{eqn:H} with the observation that, for an edge $e$ with vertices $v_{c,i},v_{c',i'}$, Eq.~\eqref{eqn:f_for_F} implies
\begin{equation}
    \tr( O_e \cdot F(m) )
    =
    \tfrac13
    m_{c,i} \cdot m_{c', i'}.
\end{equation}
This can be used in conjunction with Eq.~(\ref{eqn:H}) to retrieve the commutative map Eq.~(\ref{eqn:HFcut}).

\begin{figure*}
    \includegraphics[width=0.95\linewidth]{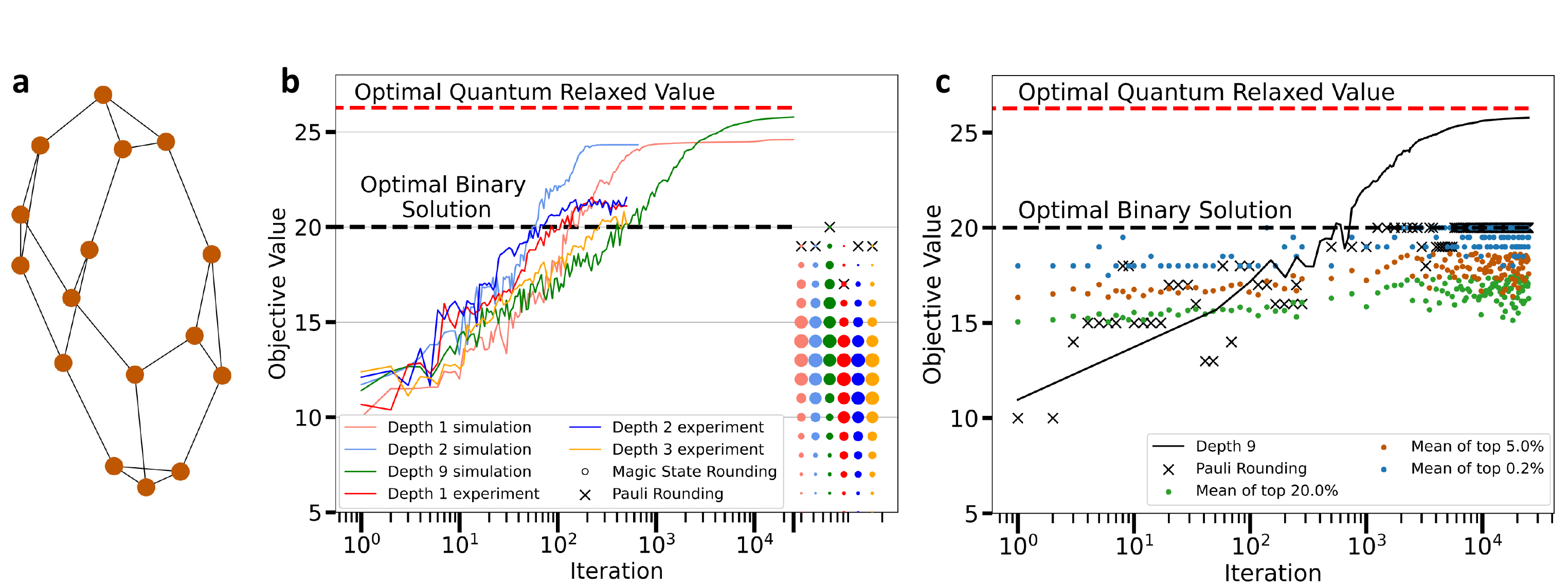}
    \caption{\textbf{Relaxed optimization and rounding.} \textbf{(a)} The 16-node 3-regular MaxCut problem graph used to benchmark our relaxed optimization. \textbf{(b)} Quantum variational search for a solution of the relaxed problem, with variational circuits of different depths. The optimal relaxed value (dashed red line) is larger than the optimal combinatorial solution (dashed black line). Variationally-optimized solutions are rounded to admissible values according to magic state and Pauli rounding.
    \textbf{(c)} Robustness of rounding versus optimizer iterations. The convergence of the top 20\%, 5\%, and 0.2\% rounded cuts for magic state rounding, and the cut value obtained with Pauli rounding, are shown as a function of iteration number.}
    \label{Figure2}
\end{figure*}

Once the a relaxed Hamiltonian $H$ is obtained, we use a quantum computer to find an approximation $\rho_\rel$ to the highest-energy state associated with $H$. Note that the highest-energy state is mapped to the ground state if $H\rightarrow -H$. While in our experiments we have focused on  a variational approach, we note the best approximation to the ground state can be obtained in other ways~\cite{kitaev1995quantum,anshu2021improved,farhi2000adiabatic,motta2020determining}. 
The state $\rho_\rel$ is not, in general, in the image of $F$. For almost all graphs, its energy is strictly greater than the energy of the embedding of the optimal configuration of binary variables, $F(m^*).$ We are therefore required to provide a procedure that produces a candidate configuration $m$. We accomplish this task in two ways, with two rounding methods that we call {\it magic state rounding} and {\it Pauli rounding}. 

For the first rounding method, consider the single-qubit magic states 
\begin{equation}\label{eqn:3-1-p-qrac}
    \mu^\pm
    =
    \tfrac12
                \left(
                    I \pm \tfrac1{\sqrt3}  \left(  
                                            X + Y + Z
                                        \right)
                \right)
\end{equation} 
and set $\mu_1^\pm = X\mu^\pm X$, $\mu_2^\pm = Y\mu^\pm Y$, $\mu_3^\pm = Z\mu^\pm Z$, $\mu_4^\pm = \mu^\pm$. Measuring in the bases associated with $\mu_i^\pm$ for $i\in[4]$ corresponds to estimating the expectation value of $\mu_i^+ - \mu_i^-$.
We define magic state rounding as the procedure $\M$ that takes a single-qubit density $\rho$ and uniformly at random selects a measurement basis $\{\mu_i^+,\mu_i^-\}_{i\in[4]}$ in which to measure a quantum state $\rho$. This provides a state $\M(\rho)$ that, in expectation, is
\begin{equation}
\label{eq:depolarizing}
\begin{aligned}
    \E(\M(\rho))
    &=
    \E_{i\in[4]} \E_{s\in\{\pm \}} (\M(\rho))
    \\
    &=
    \frac14 \sum_{i\in[4]} \sum_{s\in\{\pm \}} \tr(\mu_i^s \rho) \mu_i^s
    \\
    &=
    \mathcal{E}_{1/3}(\rho)
\end{aligned}
\end{equation}
where $\mathcal{E}_\delta$ is the single-qubit depolarizing channel: $\mathcal{E}_\delta(I)=I$ and $\mathcal{E}_\delta(P)=\delta P$ for $P\in\{X,Y,Z\}$. 
On a quantum processor, which can natively only perform measurement in the $Z$-basis, measuring in magic bases requires only single-qubit unitaries. We provide a detailed description of those basis change unitaries in Supplementary Information~\ref{app:measurement}.
We apply $\M$ to all qubits independently and denote this quantum channel $\M^{\otimes n}$. Upon application of $\M^{\otimes n}$, the state of each qubit is one of $\mu_i^\pm$, hence the three binary variables associated with each qubit become fixed via Eq.~\eqref{eqn:31encoding}, and the associated classical decision variables are given via Eq.~(\ref{eqn:31encoding}). We have established a map
\begin{align}
    \M^{\otimes n} : \rho_\rel \mapsto F(m)
\end{align}
for some $m$.
Using the result in Eq.~(\ref{eq:depolarizing}), we now show a relaxation bound for the average cut obtained with the map $\M^{\otimes n}$. First, let $\gamma$ denote the approximation ratio for a fixed configuration $m$ (relative to any optimal configuration $m^*$). That is,
\begin{equation}\label{eqn:approx-ratio}
    \gamma
    =
    \frac{\cut(m)}{\cut(m^*)}
    =
    \frac{ \tr (H \cdot F(m) ) }
        { \tr(H \cdot F(m^*)) }.
\end{equation}
Let us refer to the procedure described above as magic state rounding. We are now in the position of presenting the following theorem
\begin{theorem}\label{thm:lower-bound}
    Given access to $\rho_\rel$, a quantum state with energy between $m^*$ and the maximal eigenstate of $H$, magic state rounding produces a configuration $m$ whose expected approximation ratio is at least $5/9$. That is,
\begin{equation}\label{eqn:magic-approx-ratio}
    \E(\gamma)
    =
    \frac{ \E\left( \tr (H \cdot \M^{\otimes n}(\rho_\rel) ) \right) }
        { \tr(H \cdot F(m^*)) }
    \ge \frac59.
\end{equation}
\end{theorem}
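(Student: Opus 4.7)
The plan is to compute the expected post-rounding energy $\E(\tr(H \cdot \M^{\otimes n}(\rho_\rel)))$ in closed form and then apply the spectral hypothesis on $\rho_\rel$. Because every outcome state $\mu_i^\pm$ equals $f(m_1,m_2,m_3)$ for some sign pattern by Eq.~(\ref{eqn:31encoding}), each realization of $\M^{\otimes n}(\rho_\rel)$ lies in the image of $F$, and Eq.~(\ref{eqn:HFcut}) identifies the post-rounding energy with a bona fide cut $\cut(m)$. Eq.~(\ref{eq:depolarizing}), together with linearity and the tensor structure of the rounding, then gives
\begin{equation}
    \E\bigl(\tr(H \cdot \M^{\otimes n}(\rho_\rel))\bigr) = \tr\bigl(H \cdot \mathcal{E}_{1/3}^{\otimes n}(\rho_\rel)\bigr).
\end{equation}

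Next I would use the fact that each $O_e = P_{c,i} P_{c',i'}$ has Pauli weight \emph{exactly} $2$, since the LDF coloring enforces $c\neq c'$ so the two single-qubit Paulis live on distinct qubits. Self-adjointness of $\mathcal{E}_{1/3}$, together with the fact that it scales each non-identity single-qubit Pauli by $1/3$ and fixes $I$, yields $\mathcal{E}_{1/3}^{\otimes n}(O_e) = \tfrac{1}{9}O_e$. Plugging this into Eq.~(\ref{eqn:H}) gives $\tr(H \cdot \mathcal{E}_{1/3}^{\otimes n}(\rho_\rel)) = |E|/2 - (1/6)\sum_{e} \tr(O_e \rho_\rel)$, and the analogous expansion of $\tr(H\rho_\rel)$ lets me solve $\sum_e \tr(O_e \rho_\rel) = \tfrac{1}{3}(|E| - 2\tr(H\rho_\rel))$. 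Substituting back collapses the expected cut into the clean affine identity
\begin{equation}
    \E\bigl(\tr(H\cdot \M^{\otimes n}(\rho_\rel))\bigr) = \frac{4|E|}{9} + \frac{1}{9}\tr(H\rho_\rel).
\end{equation}

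Finally, I would close with two elementary inequalities. The hypothesis gives $\tr(H\rho_\rel) \geq \tr(H\cdot F(m^*)) = \cut(m^*)$, which controls the linear term, while the universal bound $\cut(m^*)\leq |E|$ (no cut exceeds the total number of edges) controls the offset via $4|E|/9 \geq 4\cut(m^*)/9$. Combining,
\begin{equation}
    \E\bigl(\tr(H \cdot \M^{\otimes n}(\rho_\rel))\bigr) \geq \frac{4\cut(m^*)}{9} + \frac{\cut(m^*)}{9} = \frac{5}{9}\cut(m^*),
\end{equation}
and dividing by $\cut(m^*)$ yields the claimed bound on $\E(\gamma)$. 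The principal obstacle is really only the affine-collapse step above: it hinges on every edge term of $H$ having exactly Pauli weight $2$, so that depolarization damps each by the same uniform factor $1/9$. If the relaxed Hamiltonian had mixed Pauli weights (as would arise for alternative QRAC encodings, whose analogous bound becomes $0.625$), the combination would no longer fold into a single affine function of $\tr(H\rho_\rel)$, and one would need a term-by-term analysis instead.
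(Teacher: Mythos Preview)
Your proof is correct and follows essentially the same route as the paper: both reduce the expected post-rounding energy to a $1/9$-depolarization of the weight-$2$ part of $H$, then combine the hypothesis $\tr(H\rho_\rel)\ge\cut(m^*)$ with $\cut(m^*)\le|E|$ (equivalently, the paper's minimization over $\alpha\in[0,|E|/2]$) to extract the $5/9$ ratio. Your affine identity $\tfrac{4|E|}{9}+\tfrac19\tr(H\rho_\rel)$ is just a repackaging of the paper's split $H=\tfrac{|E|}{2}I+(H-\tfrac{|E|}{2}I)$; one minor aside---your closing remark attributes the $(2,1,p)$ bound of $5/8$ to ``mixed Pauli weights,'' but in fact those edge operators are still uniformly weight~$2$, and the change comes solely from the different damping factor ($1/4$ rather than $1/9$) of the corresponding rounding channel.
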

We give the proof of this result in the Supplementary Information. Furthermore, suppose $\rho_1,\rho_2$ are any states satisfying $\tr(H \rho_1)\ge \tr(H \rho_2)\ge |E|/2$. Then the proof of Theorem~\ref{thm:lower-bound} also indicates that the rounded solutions respect the inequality in expectation:
$
\E(\tr(H\cdot \M^{\otimes n}(\rho_1)))
\ge 
\E(\tr(H\cdot \M^{\otimes n}(\rho_2)))
$. This motivates looking for states of higher energy. The higher the enegy of the quantum state we round from, the better. In general, the rounded average expected value is linearly proportional to $\tr[H\rho]$; this second fact motivates looking for encodings whose relaxed spectrum is maximally separated from the optimal cut $m^*$.
We also recast the expected approximation ratio in the regime where the maximum possible cut value is small (MaxCutGain) in the Supplementary Information, where the guaranteed approximation ratios of classical approximation algorithms~\cite{gw1995} are not very useful. Finally, note that using a $(2,1,p)$ QRAC encoding, the expected approximation ratio $\E(\gamma)$ increases to $5/8=0.625$.

The second rounding procedure we propose is simpler than magic state rounding, and closer in spirit to the bit recovery used in quantum random access codes. It is based on the direct estimation of the Pauli operators $P_{c,i}$ associated to each vertex $v_{c,i}$. The graph variables are then chosen according to $m_{c,i} = \sgn(\tr (P_{c,i}\cdot \rho))$. If $\tr (P_{c,i}\cdot \rho)$ is exactly zero, we assign a value uniformly at random. We call this {\it Pauli rounding}. 

One important aspect to consider is the measurement cost of estimating $\tr (H\rho)$ if a candidate ground state is obtained with variational algorithms. The following lemma shows a dependence logarithmic in the system size $|E|$ of the error $\epsilon$ for an estimate of the approximation ratio $\gamma$ in Eq.~(\ref{eqn:approx-ratio}) - a multiplicative error. We use classical shadows via random Pauli measurements \cite{hkp20} whose variance we can bound as observed in the following:
\begin{lemma}
    The classical shadows estimator for $\tr(H \rho)$ achieves an estimate with at most $\varepsilon$ multiplicative error with success-rate $1-\delta$ provided the number of measurements $S$ satisfies $S > \frac{2\cdot 3^4}{\varepsilon^2} \log ( 2 |E| / \delta )$.
\end{lemma}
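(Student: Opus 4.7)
The plan is to combine the classical shadows variance bound of Huang--Kueng--Preskill~\cite{hkp20} for random Pauli measurements with a union bound over the $|E|$ edge operators, and then convert the resulting additive error into a multiplicative one by lower-bounding $\tr(H\rho)$.

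First, I would recall that in the random Pauli measurement protocol, the shadow norm squared of a Pauli operator of weight $k$ equals $3^{k}$. Since $H = \sum_{e\in E}\tfrac12(I - 3O_e)$ with each $O_e$ a weight-$2$ Pauli, every summand has shadow norm squared $3^{2}=9$. Using the median-of-means estimator organized into $K$ batches of $N$ single-shot shadows, the HKP concentration inequality guarantees that a single expectation $\tr(O_e\rho)$ is estimated to additive accuracy $\varepsilon'$ with failure probability at most $2e^{-K/2}$ provided $N \gtrsim 9/\varepsilon'^{2}$. Taking a union bound over the $|E|$ edges and choosing $K = 2\log(2|E|/\delta)$ ensures that all $|E|$ estimators are simultaneously within $\varepsilon'$ of their true values with probability at least $1-\delta$.

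Next, I would pass from per-edge additive error to additive error on $\tr(H\rho)$. By the triangle inequality applied to the sum defining $H$, the induced estimator of $\tr(H\rho)$ has additive error at most $\tfrac{3}{2}|E|\varepsilon'$. To convert this into the multiplicative error $\varepsilon$ advertised in the lemma, I would invoke the assumption (implicit from Theorem~\ref{thm:lower-bound}) that the relaxed state satisfies $\tr(H\rho)\ge \cut(m^*)\ge |E|/2$, where the last inequality is the standard random-assignment bound on MaxCut. Dividing yields a multiplicative error of at most $3\varepsilon'$, so it suffices to pick $\varepsilon' = \varepsilon/3$. Substituting back, the total number of single-shot shadows is
\begin{equation*}
    S = NK \;\gtrsim\; \frac{9}{\varepsilon'^{2}}\cdot 2\log\!\left(\frac{2|E|}{\delta}\right)
    \;=\; \frac{2\cdot 3^{4}}{\varepsilon^{2}}\log\!\left(\frac{2|E|}{\delta}\right),
\end{equation*}
which is exactly the stated bound.

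The main obstacle is bookkeeping rather than genuine difficulty: one must be careful that the shadow-norm bound $3^{k}$ is invoked correctly for the random Pauli protocol (as opposed to the Clifford protocol, where the bound is different), and that the median-of-means constants line up with the factor $2$ appearing in the final expression. A minor subtlety is that the multiplicative-to-additive conversion rests on the $\tr(H\rho)\ge |E|/2$ lower bound; without the hypothesis placing $\rho$ at an energy at least that of $F(m^{*})$, one would only obtain an additive guarantee. Everything else follows from standard concentration and a union bound.
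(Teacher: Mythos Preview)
Your proposal is correct and follows essentially the same route as the paper: control each $\tr(O_e\rho)$ estimator to additive accuracy, take a union bound over the $|E|$ edges, apply the triangle inequality to $H=\sum_e\tfrac12(I-3O_e)$, and convert to multiplicative error via the lower bound $|E|/2$. Two small differences are worth flagging. First, the paper does not use median-of-means: it observes that the single-shot classical-shadows estimator $\mu_e^{(s)}$ for a weight-2 Pauli is bounded in absolute value by $3^2$ and applies Hoeffding's inequality directly to the empirical mean over $S$ shots, which yields the same per-edge bound $S>\tfrac{2\cdot 3^2}{\kappa^2}\log(2|E|/\delta)$ with less batching bookkeeping. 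Second, the paper normalizes the multiplicative error by $\tr(H\cdot F(m^*))=\cut(m^*)$ rather than by $\tr(H\rho)$; since $\cut(m^*)\ge |E|/2$ holds unconditionally, this removes the need for your extra hypothesis that $\rho$ has energy at least that of $F(m^*)$.
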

The lemma is proved in the Supplementary Information. Furthermore, numerical evidence suggests that the number of measurements required will be significantly reduced if we use optimized randomized estimators~\cite{hadfield20, huang2021efficient, hadfield2021adaptive, wu2021overlapped}. For completeness we also prove, in Supplementary Information~\ref{app:recovery_cut_given_rho}, that if a state $\rho$ is known to be in the image of $F$, then the estimation of $\tr(H \rho)$, with success rate $1-\delta$, requires a number of measurements $S$ satisfying
$
    S 
    >
    {2\cdot 3^{ 4 }}
    \cdot
    \log ( 2 |E| / \delta )
$.

\begin{figure*}
\includegraphics[width=1.0\linewidth]{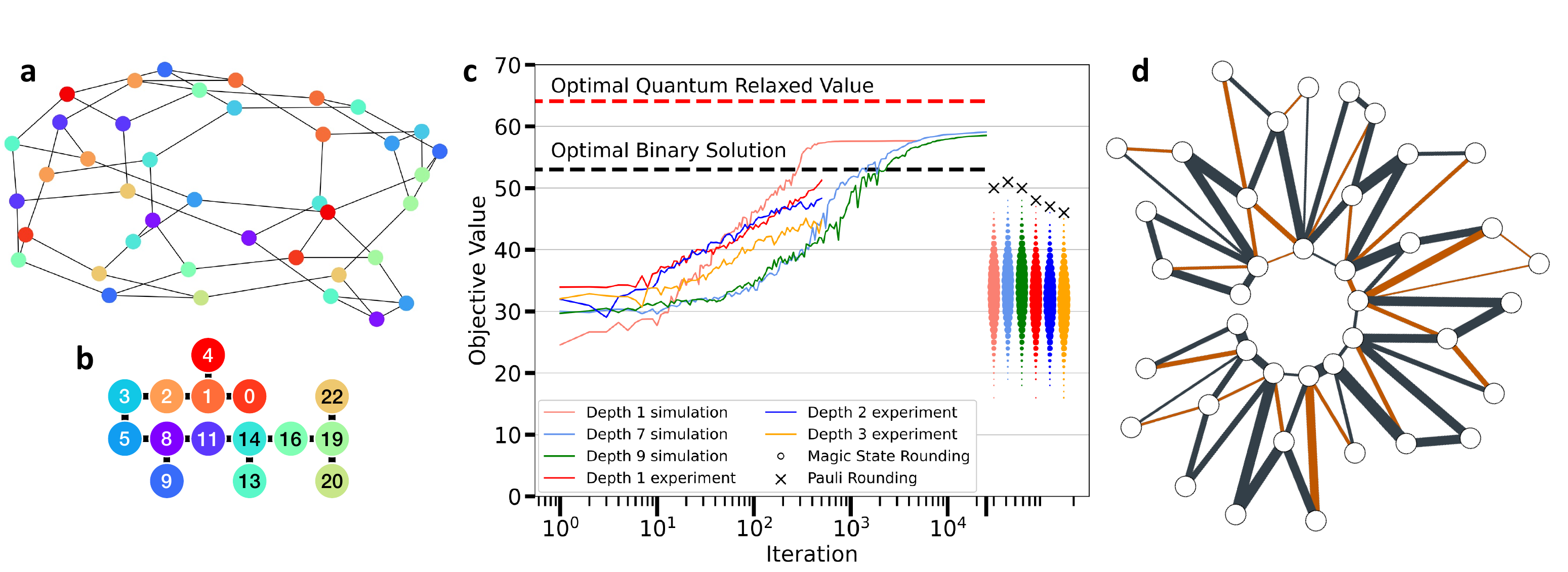}
\caption{\label{Figure3}  \textbf{Experiments for unweighted MaxCut on a 40-node 3-regular graph and weighted MaxCut on a 40-node planar graph.} \textbf{(a)} The 40-node unweighted 3-regular graph, the largest graph on which we tested the algorithm. \textbf{(b)} The qubits of the \emph{ibmq\_dublin} quantum processor used for the experiments. We map 40 vertices into 15 qubits. Vertices assigned to the same qubits are colored the same. \textbf{(c)} Performance of the relaxed optimization for the MaxCut on the 40-node 3-regular graph in \textbf{(a)}. We obtain approximation ratios of $\gamma = 0.962$ for theory simulation and $\gamma = 0.905$ for hardware experiment on the 16 qubits in \textbf{(b)}, using Pauli rounding.
\textbf{(d)} A weighted MaxCut problem on a planar graph relevant for industrial aerospace design, see Supplementary Information for details on the formulation of the problem. The thickness of the graph edges represents the weights of the MaxCut problem and the black (orange) edges represent satisfied (unsatisfied) constraints from a hardware experiment. The solution obtained has a cost of 617 for an optimal cost for this problem of 641.}
\end{figure*}

We now present numerical simulations and hardware experiments benchmarking the relaxation methods proposed. In order to test the performance of the two quantum rounding methods presented, we first perform numerical simulations on 100 random MaxCut instances on 3-regular graphs of increasing size, with $n = 8, 16, 24, 32, 36, 40$, see Fig.~\ref{Figure1}e.  We construct relaxed Hamiltonians using (3,1,p) QRAC embeddings for each MaxCut graph. On average, these relaxed Hamiltonians have a number of qubits reduced by a factor 2.6 with respect to the number of nodes in the original 3-regular graphs. We numerically find their ground states via exact diagonalization and perform magic state rounding 100 times, separately for each graph. For every call to magic state rounding, we compute the corresponding approximation ratio $\gamma$ as in Eq.~(\ref{eqn:approx-ratio}), using IBM ILOG CPLEX~\cite{cplex} to obtain the optimal cuts $m^*$. We show all the approximation ratios via histograms, observing that on average we indeed obtain cuts greater than $5/9$. We then benchmark Pauli rounding on the same problems, computing exactly for each ground state $\tr(P_{c,i}\rho)$ for every vertex. We observe that Pauli rounding obtains better approximation ratios on average. However, note that while Pauli rounding is a deterministic procedure, the stochastic component of magic state rounding could lead to better sampled solutions for some graphs.   

Next, we probe the robustness of the algorithm when used in conjuction with a variational eigensolver, used to find a candidate ground state $\rho_\rel$. We pick two 3-regular graphs from the random ensemble generated for Fig.~\ref{Figure2}e, of 16 and 40 nodes, with exact approximation ratio $\gamma = 1$ according to Pauli rounding. We start with the 16-node graph, represented in Fig~\ref{Figure2}a. We search for the ground state of its relaxed Hamiltonian both numerically and on the \emph{ibmq\_dublin} quantum processor. In both cases, we use a hardware-efficient ansatz~\cite{kandala2017hardware} with different depths, see Fig.~\ref{Figure2}b. The numerical experiments are conducted using the COBYLA optimizer for up to 25000 iterations, while the experiments on the superconducting processor were performed with an adaptive SPSA optimizer~\cite{eddins2021doubling}, using 500 iterations and 8192 measurements for each independent basis to estimate the relaxed Hamiltonian. For further details about the experiments, see Supplementary Information.

We observe in the simulations that the trial states reach relaxed values of the objective function---our Hamiltonian ``energy''---which are inaccessible to the combinatorial cost function. Once the state is optimized, we perform magic state rounding 1000 times to obtain admissible cut distributions, which are represented by the colored markers on the bottom right of Fig.~\ref{Figure2}b. We then apply Pauli rounding to the optimized states for the different depths, obtaining cut values indicated by the ``X'' markers. The state prepared with a trial circuit of depth 9 obtains an approximation ratio of 1, even if its optimized objective value did not match exactly the optimal relaxed value, hinting at a robustness from the relaxed formulation of the problem. We then perform a variational optimization on the \emph{ibmq\_dublin} quantum processor, observing even on noisy hardware that relaxed optimized energies go above the optimal combinatorial solution. 

We carry out numerical simulations to show the robustness of the rounding schemes on this graph against an incomplete optimization search in Fig.~\ref{Figure2}c. The quantum state is now rounded as it goes through the variational optimization, observing that good convergence is achieved for both rounding schemes already around $\sim 1000$ iterations, without requiring an accurate ground state preparation, which happens at the end of the 25000 optimization steps.

We then test the proposed methods on the largest graph considered here, a 40-node graph represented in Fig.~\ref{Figure3}a. We map its 40 vertices into 15 qubits of the \emph{ibmq\_dublin} quantum processor, color coded in Fig.~\ref{Figure3}b. The performance of the algorithm is shown in Fig.~\ref{Figure3}c for a hardware-efficient ansatz of various depths. The quantum state is optimized by maximizing mean values for a relaxed Hamiltonian computed on the 15 qubits of Fig.~\ref{Figure3}b, then rounded using both magic state and Pauli rounding. Approximation ratios of $\gamma = 51/53\approx 0.9623$ and $\gamma = 0.905$ are obtained for theory and hardware runs, respectively. 

Finally, we consider a weighted MaxCut problem $    \max_{m\in\{-1,1\}^{|V|}}\sum_{e_{i,j}\in E} \frac12 w_{ij}(1 - m_i m_j)$ with edge weights $w_{ij}>0$ on a planar graph on 40 nodes, arising in the context of industrial aerospace design. We show that our algorithm can be readily extended to weighted MaxCut in the Supplementary Information, where we also provide more details about the formulation of the design problem. In Figure~\ref{Figure3}d, we report the best solution obtained in experiments, with Pauli rounding, by plotting the weighted graph, displaying the satisfied constraints in black and unsatisfied constraints in orange. As expected within the weighted MaxCut formulation, we observe that most high-weight constraints are satisfied. We experimentally obtain an approximate solution of 617, where the optimal objective value is 641. 

The theoretical and experimental results presented here show that it is possible to produce approximate solutions of MaxCut and weighted MaxCut by preparing on quantum computers approximate high energy states of quantum Hamiltonians, which are relaxations of the original combinatorial problems. The largest experiments conducted on a superconducting quantum processor have shown approximation ratios for 40-node graphs of 0.905 on unweighted 3-regular MaxCut and $617/641\approx0.9626$ on weighted planar MaxCut. Through a randomized quantum rounding procedure, we have established rigorous connections between the energy of a relaxed quantum state and the average value of its rounded MaxCut approximation. We have proved that rounding from higher relaxed energy states leads to better combinatorial solutions. This motivates further research to look into classes of graphs that show large separations between the highest energy states of the relaxed spectrum and the optimal classical cut values. Approximation algorithms for many-body quantum states~\cite{AnshuGossetMorenz2020} could be employed to prepare ground state approximations for the relaxed problems. While we have considered only embeddings based on 1-qubit quantum random access codes, other more sophisticated embeddings can be considered which could result in improved relaxations. Other rounding protocols for relaxed states can be investigated, which for example could take into account non-local qubit correlations.

\section*{Acknowledgements and Author contributions}

We acknowledge useful conversations with Ramis Movassagh, Giacomo Nannicini, Stefan W{\"o}rner.
C.H., A.M., R.R. designed the algorithms. R.R. conceived the idea of applying QRAC for optimization. C.H. derived the approximation bounds for magic state rounding and the estimation errors. B.F. performed the experiments on the IBM Dublin device.  A.B., B.F., C.H., T.IM., T.IT., Y.J, M.K., R.T. performed the numerical simulations on the 3-regular graphs and the design problem. B.F., A.M., R.T., Y.J, M.K., A.B., J.G. formulated the design problem as a weighted MaxCut. All authors contributed to the data analysis and writing of the manuscript.

\bibliographystyle{unsrt}
\bibliography{bib}

\onecolumngrid
\newpage

\section*{Supplementary Information}
\section{Algorithms}\label{app:algorithm}

Here we summarie the algorithms described in the main text.

\begin{algorithm}[H]\label{alg:vqrao}
\caption{Approximate Optimization of MaxCut using Quantum Relaxation and Rounding}
\label{alg:maxcut_with_magic}
\begin{algorithmic}[1]
\Input Graph $G=(V,E)$; Number of quantum measurements $S$; A quantum rounding subroutine $R$.
\Output Approximate solution $m\in\{-1, 1\}^{|V|}$ to MaxCut problem.
    
\State Color graph $G$ using Largest-Degree-First algorithm giving $C$ colors and partition $V=\cup_{c\in[C]}V_c$
    
\State Index vertices $v_{c,i}$ where $c\in[C]$ and $i\in[|V_c|]$
    
    \For{color $c\in[C]$}
        \State{Assign $n_c = \left\lceil |V_c| / 3 \right\rceil$ qubits to variables $\{ m_{c,i} \}_{i\in[|V_c|]}$}
    \EndFor
    
\State Define embedding 
$F : \{-1, 1\}^{|V|} \to \D(\C^{2^n})$ where $n=\sum_{c\in[C]}n_c$

\State Define 2-local Hamiltonian $ H = \sum_{(i,i')\in E} \frac12 \left(I - 3\cdot P_{c,i}P_{c',i'} \right)$
\State Define an assignment $M_{V \mapsto P}: v_{c,i} \mapsto P_{c,i}$ from vertices to single qubit Pauli Observables in $H$.

\State Obtain $\tilde{\rho}_\rel$, an approximation  to the ground state of $H$, prepared by the quantum circuit $\mathcal{U}_\rel$

\State Obtain the approximate solution $m = R(\mathcal{U}_\rel, S, M_{V \mapsto P})$

\Return $m$
\end{algorithmic}
\end{algorithm}

\begin{algorithm}[H]\label{alg:magic-rounding}
\caption{Magic State Rounding Subroutine}
\begin{algorithmic}[1]
\Input An oracle $\mathcal{U}_\rel$ which prepares $\tilde{\rho}_\rel$; Number of quantum rounding steps $S$.
\Output Approximate solution $m\in\{-1, 1\}^{|V|}$ to MaxCut problem.

\State Initialize approximate solution $m=(1,1,\dots,1)$
    
    \For{quantum rounding $s\in[S]$}
        \State Denote by $m'$ the configuration to be obtained
        \State Prepare $\tilde{\rho}_\rel$ on quantum processor
        \For{qubit $q\in[n]$}
            \State Uniformly at random pick $j\in[4]$
            \State Measure qubit $q$ in magic basis $\{\mu_j^+,\mu_j^-\}$
            \State Assign values in $m'$ associated with qubit $q$ according to measurement
        \EndFor
        \If{$\cut(m')>\cut(m)$}
            \State Replace $m$ with $m'$
        \EndIf
    \EndFor
    \Return $m$

\end{algorithmic}  
\end{algorithm}

\begin{algorithm}[H]\label{alg:paulic-rounding}
\caption{Pauli Rounding Subroutine}
\begin{algorithmic}[1]
\Input An oracle $\mathcal{U}_\rel$ which prepares $\tilde{\rho}_\rel$; Number of quantum rounding steps $S$. An assignment $v_{c,i} \mapsto P_{c,i}$ from vertices to Pauli Observables.
\Output Approximate solution $m\in\{-1, 1\}^{|V|}$ to MaxCut problem.

\State Initialize approximate solution $m=(1,1,\dots,1)$
\State Define $ S_q =  \min(3, \max(\Vert V\Vert))$ to be the maximum number of variables    
 assigned to any one qubit. 
    \For{$v_{c,i} \in H$}
        \State Prepare $\tilde{\rho}_\rel$ on quantum processor
        \State Using $S_q$ measurements, estimate $\tr(H \cdot P_{c,i})$
        \If{$\tr(H \cdot P_{c,i})$ = 0}:
            \State Assign the value in $m$ associated with vertex $v_{c,i}$ uniformly at random
        \Else 
            \State Assign the value in $m$ associated with vertex $v_{c,i}$ according to $\text{sign}(\tr(H \cdot P_{c,i}))$
            \EndIf
        \EndFor
    \Return $m$

\end{algorithmic}  
\end{algorithm}


\newpage
\section{A relaxation bound}\label{app:relaxation}

In this section we provide a relaxation bound on the energy obtained by rounding quantum states with magic state rounding. Let $\rho_\rel$ be a quantum state with energy between $m^*$ and the maximal eigenvalue of the Hamiltonian $H$ and recall that $F(m^*)$ is a maximum-energy state within the image of $F$. We provide a probabilistic algorithm which takes $\rho_\rel$ and provides a state $F(m)=\M^{\otimes n}(\rho_\textrm{relax})$ such that, in expectation, $\E(\tr(H\cdot F(m))) \ge \frac59 \tr(H\cdot F(m^*))$. 
The idea is an application of Lieb's theorem \cite{lieb73} upon modifying an alternative proof of said theorem from \cite{bgkt2019}.

Consider the magic states $\mu^\pm = \frac12(I \pm \frac1{\sqrt3}(X+Y+Z) )$ on a single qubit and set
\begin{equation}\label{eqn:magic-bases}
    \mu_1^\pm = X\mu^\pm X,
    \qquad
    \mu_2^\pm = Y\mu^\pm Y,
    \qquad
    \mu_3^\pm = Z\mu^\pm Z,
    \qquad
    \mu_4^\pm = \mu^\pm.
\end{equation}
Measuring in the bases associated with $\mu_i^\pm$ for $i\in[4]$ corresponds to estimating the expectation value of $\mu_i^+ - \mu_i^-$ respectively.
Let $\M$ be the procedure that takes a single-qubit density $\rho$ and uniformly at random selects a measurement basis $\{\mu_i^+,\mu_i^-\}_{i\in[4]}$ in which to measure $\rho$. This provides a state $\M(\rho)$. In expectation
\begin{align}
    \E(\M(\rho))
    =
    \E_{i\in[4]} \E_{s\in\{\pm \}} (\M(\rho))
    =
    \frac14 \sum_{i\in[4]} \sum_{s\in\{\pm \}} \tr(\mu_i^s \rho) \mu_i^s
    =
    \mathcal{E}_{1/3}(\rho)
\end{align}
where $\mathcal{E}_\delta$ is the single-qubit depolarizing channel: $\mathcal{E}_\delta(I)=I$ and $\mathcal{E}_\delta(P)=\delta P$ for $P\in\{X,Y,Z\}$. This channel, when applied to a quantum state supported on multiple qubits, is entanglement breaking for $\delta\le 1/3$.

We extend the previous paragraph to the multi-qubit setting by considering the map $\M^{\otimes n}$ so that for traceless single-qubit Pauli operators $P,Q$ supported on different qubits, and an $n$-qubit density $\rho$, we obtain
\begin{align}\label{eqn:one_over_nine}
    \E\left(
        \tr(PQ \cdot \M^{\otimes{n}} (\rho))
    \right)
    = 
    \tr(PQ \cdot \mathcal{E}_{1/3}^{\otimes{n}} (\rho))
    =
    \tr(\mathcal{E}_{1/3}^{\otimes{n}}(PQ) \cdot \rho)
    =
    \frac19 \tr(PQ \cdot \rho).
\end{align}
The second equality follows from self-adjointness of $\mathcal{E}_\delta^{\otimes{n}}$ with respect to the inner-product $\langle A,B\rangle = \tr(A\cdot B)$.

The procedure of applying $\M^{\otimes n}$ produces a state which is in the image of $F$. If we therefore start with $\rho_\rel$, then we will recover some (random-variable) encoded state $F(m) = \mathcal{M}^{\otimes n}(\rho_\rel)$. Note that $H-\frac{|E|}{2}I$ is a Hamiltonian of weight-2 Pauli operators and therefore, by linearity of Eq.~\eqref{eqn:one_over_nine},
\begin{equation}\label{eqn:one_over_nine_linearity}
    \E\left( \tr\left((H-\tfrac{|E|}{2}I)\cdot \M^{\otimes n}(\rho_\rel) \right) \right)
    =
    \frac19 \tr\left((H-\tfrac{|E|}{2}I)\cdot \rho_\rel \right)
    \ge
    \frac19 \tr\left((H-\tfrac{|E|}{2}I)\cdot F(m^*) \right).
\end{equation}
The final term is bounded, that is, $\tr\left((H-\tfrac{|E|}{2}I)\cdot F(m^*) \right)$ takes values in the interval $[0, \frac{|E|}{2}]$.
Shifting the energy back to include the identity term recovers a lower bound for the approximation ratio. Specifically, let $\gamma$ denote the approximation ratio for a configuration $m$ (relative to any optimal configuration $m^*$), that is $\gamma = \cut(m) / \cut(m^*)$. Then, in expectation, this procedure leads to the following bound for the approximation ratio:
\begin{align*}
    \E(\gamma)
    &=
    \frac{ \E\left( \tr (H \cdot \M^{\otimes n}(\rho_\rel) ) \right) }
        { \tr(H \cdot F(m^*)) }
    \\
    &=
    \frac1{ \tr(H \cdot F(m^*)) }
    \left(
        \frac{|E|}{2}
        +
        \E\left( \tr\left((H-\tfrac{|E|}{2}I)\cdot \M^{\otimes n}(\rho_\rel) \right) \right)
    \right)
    \\
    & \ge
    \frac1{ \tr(H \cdot F(m^*)) }
    \left(
        \frac{|E|}{2}
        +
        \frac19
        \tr\left((H-\tfrac{|E|}{2}I)\cdot F(m^*) \right)
    \right)
    \\
    &=
    \frac{
        \frac{|E|}{2}
        +
        \frac19
        \tr\left((H-\tfrac{|E|}{2}I)\cdot F(m^*) \right)
    }{
        \frac{|E|}{2}
        +
        \tr\left((H-\tfrac{|E|}{2}I)\cdot F(m^*) \right)
    }
    \\
    &\ge
    \min_{\alpha\in\left[0,\frac{|E|}{2}\right]}
    \left(
    \frac{
        \frac{|E|}{2}
        +
        \frac19
        \alpha
    }{
        \frac{|E|}{2}
        +
        \alpha
    }
    \right)
    \\
    &=
    \frac{5}{9}
\end{align*}
We have proved that the relaxation procedure, followed by rounding using magic-bases measurements, provides a candidate encoded state which, in expectation, has an approximation ratio of at least $5/9$.

\begin{remark}
Consider two states $\rho_1,\rho_2$ which satisfy $\tr(H \rho_1)\ge \tr(H \rho_2)\ge |E|/2$. Then the equality in Eq.~\eqref{eqn:one_over_nine_linearity} reads
\begin{equation}
    \E\left( \tr\left((H-\tfrac{|E|}{2}I)\cdot \M^{\otimes n}(\rho_i) \right) \right)
    =
    \frac19 \tr\left((H-\tfrac{|E|}{2}I)\cdot \rho_i \right)
\end{equation}
for $i\in\{1,2\}$ and the energies in the preceding display are both positive since we have demanded $\tr(H \rho_i)\ge |E|/2$. Adding back the identity term and using the inequality relating the energies of $\rho_1$ and $\rho_2$ implies
\begin{equation}
    \E\left( \tr\left( H \cdot \M^{\otimes n}(\rho_1) \right) \right)
    \ge
    \E\left( \tr\left( H \cdot \M^{\otimes n}(\rho_2) \right) \right).
\end{equation}
Of course, this chain of reasoning can also be turned into a statement about the respective approximation ratios obtained from $\rho_1$ and $\rho_2$.
\end{remark}

\section{A relaxation bound for graphs with small MaxCut value}\label{app:relaxation_small_expansion}

In this section, we obtain a formula for the approximation ratio in the regime where the MaxCut value is small. 
In this regime, any bound of the form as that given in Theorem~\ref{thm:lower-bound}, including the lower bound obtained by Goemans and Williamson~\cite{gw1995}, becomes trivial. This is because for small MaxCut, such a bound will be worse than $|E|/2$; the value obtained, in expectation, by random assignment of the values of the binary variables.

Consider a graph $G=(V,E)$. Let us introduce the parameter $0<\varepsilon^*\le 1/2$ such that
\begin{equation}\label{eqn:gain-cut}
    \cut(m^*) = \left(\frac12 + \varepsilon^*\right) |E|.
\end{equation}
The variable $\varepsilon^*$ is called the gain, and the problem of evaluating $\varepsilon^*$ is called MaxCutGain.
We shall consider the regime $\varepsilon^*\ll 1/2$. Charikar and Wirth \cite{charikar04} give an efficient approximate classical algorithm for this problem whose approximation for the gain scales $\Omega(\varepsilon^*/\log(1/\varepsilon^*))$. Assuming the Unique Games Conjecture, this is optimal \cite{khot09}. Trevisan~\cite{TrevisanCut2009} shows a spectral partitioning algorithm for MaxCut whose approximation ratio is 0.531 which can also be used in this regime in addition to the regime $\varepsilon^* \approx 1/2$. There are only a few such algorithms that have non-trivial guarantees of MaxCut's approximation ratio.

Set $\varepsilon_\rel>0$ through the equation
$\tr(H\cdot \rho_\rel) = (\frac12+\varepsilon_\rel)|E|$.
An application of the triangle inequality establishes the upper bound $\varepsilon_\rel\le 3/2$. 
Our quantum rounding procedure implies 
\begin{equation}\label{eqn:approx-equality}
    \E\left( \tr (H \cdot \M^{\otimes n}(\rho_\rel) ) \right)
    = 
    \left(
    \frac12 + \frac19\varepsilon_\rel
    \right)
    |E|.
\end{equation}
Eq.~\eqref{eqn:gain-cut} and Eq.~\eqref{eqn:approx-equality} are effectively giving Lieb's theorem: $\frac19\varepsilon_\rel\le \varepsilon^*$. However in this context, the conclusion is that our algorithm, under the assumption of access to $\rho_\rel$, provides a candidate value for MaxCutGain which scales linearly without the logarithmic cost $1/\log(1/\varepsilon^*)$.
Of course, the Unique Games Conjecture does not hold when one considers quantum computational models \cite{kempe08ugc}. 
Furthermore, note that in the context of MaxCutGain it could be advantageous even to perform quantum rounding upon classical approximations of $\rho_\rel$, which according to ~\cite{harrow17magic} can deliver energies of at least $|E|(1/2 + 1/48d)$ for $d$-regular graphs.

\section{Measurement in magic bases on a quantum processor}\label{app:measurement}

We would like to measure a single-qubit density $\rho\in\D(\C^2)$ in one of the four magic bases defined in Eq.~\eqref{eqn:magic-bases}. This section explains how to perform such a measurement when we are given a quantum processor which only allows measurement in the $Z$-basis.

Let $\pi_Z^\pm$ denote the projections onto the $\pm1$-eigenspaces of the Pauli operator $Z$. (These projections are written, in the familar braket formalism, as $\pi_Z^+=\ket{0}\bra{0}$ and $\pi_Z^-=\ket{1}\bra{1}$.)
Consider a pure state $\psi=\frac12(I + aX+bY+cZ)$ with $a^2+b^2+c^2=1$. This state can be obtained from the initial state $\pi_Z^+$ by observing 
$\psi = e^{isZ}e^{itX}\pi_Z^+ e^{-itX}e^{-sZ}$ 
with $\cos^2t = \frac12(1+c)$ and $\sin(2s)=a/\sqrt{a^2+b^2}$.
Therefore, in order to rotate between the state $\pi_Z^+$ and the magic state $\mu_4^+$, rotations must be performed such that
\begin{equation}\label{eqn:st}
    \cos^2(t) = \frac12\left(1+\frac1{\sqrt3} \right),
    \qquad
    \sin(2s) = \frac1{\sqrt2}.
\end{equation}

Let us now see how to measure a density $\rho$ in any of the four magic bases:
\begin{itemize}
    \item Consider the simplest magic basis $\{\mu_4^+, \mu_4^-\}$. First, perform the necessary rotations so that $\mu_4^+$ is rotated into the state $\pi_Z^+$. That is, perform $e^{-isZ}$ then perform $e^{-itX}$ where $s,t$ are given in Eq.~\eqref{eqn:st}. (This rotates $\mu_4^-$ into $\pi_Z^-$.) Second, measure the state in the $Z$-basis which returns either $\pi_Z^+$ or $\pi_Z^-$. This is the end of the procedure since we do not continue to use the quantum processor.
    \item Consider now a general magic basis $\{\mu_i^+,\mu_i^-\}$ where $i\in[4]$. Set $P_i=X,Y,Z,I$ given respectively $i=1,2,3,4$. First, apply $P_i$. Second, apply the procedure outlined in the preceding bullet.
\end{itemize}

\section{Recovery of Hamiltonian expectation provided arbitrary state}\label{app:variance_classical_shadows}

Consider the task of estimating $\tr(H \rho)$ for a Hamiltonian associated with a graph $G=(V,E)$. We assume, for the moment, $\rho$ is an arbitrary state on the $n$-qubit quantum processor. In order to obtain a bound on the complexity of this problem, we propose to use classical shadows with random Pauli measurements \cite{hkp20} for this task. We refer to this method as classical shadows. In practice, the technique of locally-biased classical shadows \cite{hadfield20} will be significantly more efficient. And this claim continues to hold in the context of weighted MaxCut. See \cite[Section 2]{hadfield20} for a concise definition of classical shadows.

Let $S$ be the number of state preparations of $\rho$, and let $\nu^{(s)}$ for $s\in[S]$ be a random variable obtained from the classical shadows measurement procedure which, in an unbiased fashion, estimates $\tr(H\rho)$. That is, $\E(\nu^{(s)})=\tr(H \rho)$. Set $\nu=\frac1S\sum_{s\in[S]}\nu^{(s)}$. 
In a similar fashion, let $\mu_e^{(s)}$ be a random variable associated with estimating $\tr(O_e \rho)$ for $e\in E$ using classical shadows, and set $\mu_e = \frac1S\sum_{s\in[S]} \mu_e^{(s)}$.
We consider the scaling of $S$ with respect to the size of the graph $G$ such that:
\begin{itemize}
    \item $\nu$ estimates $\tr(H \rho)$ to additive error $\varepsilon$;
    \item $\nu$ estimates $\tr(H \rho)$ to additive error $\varepsilon | \tr( H \cdot F(m^*) )|$.
\end{itemize}
The second scaling should be understood in the context of the VQE algorithm when $\rho$ is close to $\rho_\rel$, since we are using the approximation ratio as a success metric. The typical context in which VQE algorithms are used is quantum chemistry. There, one is typically interested in reaching chemical accuracy on the ground state, which is independent of system size. Here, we are instead concerned with recovering an approximate solution with a large approximation ratio, defined relative to a configuration $m^*$ maximising Eq.~\eqref{eqn:cut}. The second bullet point above is therefore estimating $\tr(H\rho)$ to \emph{multiplicative} error $\varepsilon$. And it is this multiplicative-error regime which must scale favorably in the limit of large graphs. Both of these scalings require the following
\begin{lemma}\label{lem:shadows_var_weight2}
Let $\kappa,\delta>0$. Estimating the expectation observables $O_e$ using classical shadows such that, for all $e\in E$,
\begin{equation}
    \mathbb{P} 
    \left(
        | \tr( O_e \rho) - \mu_e | < \kappa 
    \right)
    >
    1 - \delta
\end{equation}
can be accomplished if $S> \frac{2\cdot 3^2}{\kappa^2} \log(2|E|/\delta)$.
\end{lemma}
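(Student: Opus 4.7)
The plan is to estimate each of the $|E|$ weight-2 Pauli expectations independently using the random-Pauli classical shadows procedure, control the variance of the per-edge single-shot estimator, apply a scalar concentration inequality, and close with a union bound.

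First I would recall the structure of the single-shot classical shadow estimator for a Pauli observable. For each snapshot, each qubit is measured in a uniformly chosen basis $M\in\{X,Y,Z\}$ returning an outcome $b\in\{\pm 1\}$, and the single-qubit estimator for a Pauli $P$ is $3\langle b|P|b\rangle$, which equals $\pm 3$ when $M=P$ (probability $1/3$) and $0$ otherwise. Tensoring the two single-qubit estimators, for the weight-2 edge observable $O_e = P_{c,i}P_{c',i'}$ the single-shot estimator $\mu_e^{(s)}$ takes values in $\{-9,0,+9\}$, is nonzero only when both random bases match (probability $1/9$ by independence), and hence satisfies
\begin{equation}
    \mathrm{Var}(\mu_e^{(s)}) \;\le\; \E\!\left[(\mu_e^{(s)})^2\right] \;\le\; \tfrac{1}{9}\cdot 9^2 \;=\; 3^2.
\end{equation}

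Second, the sample mean $\mu_e = \tfrac{1}{S}\sum_{s\in[S]}\mu_e^{(s)}$ is an unbiased estimator of $\tr(O_e\rho)$ with variance at most $3^2/S$. I would then invoke Bernstein's inequality in the variance-dominated regime to get
\begin{equation}
    \Pr\!\left(|\mu_e - \tr(O_e\rho)| \ge \kappa\right) \;\le\; 2\exp\!\left(-\frac{S\kappa^2}{2\cdot 3^2}\right),
\end{equation}
absorbing the additive boundedness correction, which is subleading for the $\kappa$ of interest. A union bound over the $|E|$ edges upgrades the failure probability to $2|E|\exp(-S\kappa^2/18)$; demanding this be at most $\delta$ and solving for $S$ yields the stated threshold $S > \tfrac{2\cdot 3^2}{\kappa^2}\log(2|E|/\delta)$.

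The main subtlety is the choice of concentration tool: naive Hoeffding using only the range $\pm 9$ would inflate the constant by nearly an order of magnitude because it ignores the small success probability $1/9$, while Chebyshev alone would lose the logarithmic dependence on $|E|/\delta$ that the union bound requires. Bernstein's inequality (or, equivalently, the median-of-means post-processing used in the original classical shadows analysis of \cite{hkp20}) strikes the right balance, and the one small calculation that must be verified is that the Bernstein boundedness term does not overtake the variance term, equivalently that $\kappa$ is not so large as to make the shot bound degenerate; since $|\tr(O_e\rho)|\le 1$ we may restrict to $\kappa\le 2$, in which case the variance term is dominant and the constant $2\cdot 3^2$ is tight.
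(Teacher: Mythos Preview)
Your route genuinely differs from the paper's. The paper's proof is shorter and argues directly from the range: it notes that $|\mu_e^{(s)}|\le 3^2$, invokes Hoeffding's inequality to obtain
\[
    \Pr\bigl(|\tr(O_e\rho)-\mu_e|>\kappa\bigr)<2\exp\!\left(-\tfrac{S\kappa^2}{2\cdot 3^2}\right),
\]
and then takes a union bound over $e\in E$. By contrast you compute the single-shot second moment $\E[(\mu_e^{(s)})^2]\le 3^2$ and feed that into Bernstein.

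Your critique of the Hoeffding route is in fact well placed. With $\mu_e^{(s)}\in[-9,9]$, the standard Hoeffding bound has $(b-a)^2=(2\cdot 3^2)^2$ in the exponent and yields $2\exp(-S\kappa^2/(2\cdot 3^4))$, i.e.\ a constant of $2\cdot 3^4$ rather than $2\cdot 3^2$; the paper's step appears to conflate the range $3^2$ with the variance proxy. So your variance-based argument is the one that actually justifies the stated constant. The trade-off is that your Bernstein step is not quite clean either: with $\sigma^2=3^2$ and $M\approx 10$, the correction term $\tfrac{2}{3}M\kappa$ is not negligible for the $\kappa$ used downstream (e.g.\ $\kappa=1/3$ in Supplementary Information~\ref{app:recovery_cut_given_rho} gives a denominator $\approx 20$ rather than $18$), so your ``absorbing the boundedness correction'' deserves an explicit inequality, or you can swap Bernstein for the median-of-means post-processing you mention, which depends only on the variance and delivers the $O(3^2/\kappa^2)\log(|E|/\delta)$ scaling without the additive term.
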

\begin{proof}
The variance of $\mu_e$ can be found in \cite[Section 2]{hadfield20}. However it suffices to note that $O_e$ is a weight-2 Pauli operator, and the random variable $\mu_e^{(s)}$ is a random variable bounded by the value $3^2$. Hoeffding's inequality then implies that for a fixed edge $e$, we have
\begin{equation}\label{eqn:hoeffding_inside_proof}
    \mathbb{P} 
    \left(
        | \tr( O_e \rho) - \mu_e | > \kappa 
    \right)
    <
    2 \exp \left(
        \frac{ - S \kappa^2 }{2 \cdot 3^2 }
    \right)
\end{equation}
The claim of the lemma follows upon applying a union bound in order to ensure all edges are accurately measured.
\end{proof}

The additive error scaling is contained in the following lemma. (We we do not prove it in detail as it can be easily obtained from adjusting the proof of the lemma associated with the multiplicative scaling.)
\begin{lemma}
    Let $\varepsilon, \delta >0$. With probability $1-\delta$, an $\varepsilon$-accurate (in the additive sense) estimation of $\tr(H \rho)$ is obtained using the classical shadows estimator $\nu$ provided that $S > \frac{3^4}{2\varepsilon^2} |E|^2 \log (2 |E| / \delta )$.
\end{lemma}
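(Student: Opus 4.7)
The plan is to reduce additive-error estimation of $\tr(H\rho)$ to the uniform edgewise estimation already established in Lemma~\ref{lem:shadows_var_weight2}. Using Eq.~\eqref{eqn:H}, write
\begin{equation*}
    H = \frac{|E|}{2}I - \frac{3}{2}\sum_{e\in E} O_e,
\end{equation*}
and observe that the classical shadows estimator is linear in the observable, so a single batch of shadow snapshots yields simultaneously the estimators $\mu_e$ for each $\tr(O_e\rho)$ and the aggregate $\nu = \tfrac{|E|}{2} - \tfrac{3}{2}\sum_{e\in E}\mu_e$ for $\tr(H\rho)$. Consequently the triangle inequality gives
\begin{equation*}
    |\tr(H\rho)-\nu| \le \frac{3}{2}\sum_{e\in E}|\tr(O_e\rho)-\mu_e|.
\end{equation*}

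Next I would invoke Lemma~\ref{lem:shadows_var_weight2} with error parameter $\kappa = 2\varepsilon/(3|E|)$ and the same confidence parameter $\delta$. That lemma guarantees, on an event of probability at least $1-\delta$ (using its internal union bound over the $|E|$ edges), that every edge estimate satisfies $|\tr(O_e\rho)-\mu_e|<\kappa$. Substituting into the displayed triangle-inequality bound gives $|\tr(H\rho)-\nu| < \tfrac{3}{2}|E|\kappa = \varepsilon$, which is the desired additive-error statement. The shot budget required is obtained directly by plugging $\kappa = 2\varepsilon/(3|E|)$ into the shot bound of Lemma~\ref{lem:shadows_var_weight2}:
\begin{equation*}
    S > \frac{2\cdot 3^2}{\kappa^2}\log\!\left(\frac{2|E|}{\delta}\right) = \frac{3^4\,|E|^2}{2\varepsilon^2}\log\!\left(\frac{2|E|}{\delta}\right),
\end{equation*}
matching the claim.

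There is no real technical obstacle beyond the linearity observation: the only reason the prior lemma applies uniformly to all edges is that the same classical shadow post-processing certifies every $O_e$ simultaneously. The factor $|E|^2$ is simply the cost of converting a uniform per-edge accuracy $\kappa$ into a global accuracy on a sum of $|E|$ Pauli terms, each scaled by $3/2$, via a worst-case triangle-inequality bound. One could in principle sharpen the scaling by exploiting correlations among the shadow estimators $\{\mu_e\}_{e\in E}$ or by using a tighter concentration inequality tailored to the full observable $H$ (as is done in locally-biased classical shadows~\cite{hadfield20}); however, for the purpose of this lemma the crude decomposition suffices and yields the stated bound directly.
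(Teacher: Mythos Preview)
Your proposal is correct and follows essentially the same route as the paper: the paper's proof simply instructs one to rerun the argument of Lemma~\ref{lem:shadows_multiplicative_requirements} with the per-edge accuracy tightened to $2\varepsilon/(3|E|)$ and then apply the triangle inequality to $|\tr(H\rho)-\nu|$, which is exactly what you carry out explicitly via Lemma~\ref{lem:shadows_var_weight2}. Your write-up is in fact more self-contained than the paper's one-line sketch.
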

\begin{proof}
The proof of Lemma~\ref{lem:shadows_multiplicative_requirements} needs to be very slightly adjusted. Specifically, the increased number of measurements implies that the accuracy of each estimator in Eq.~\eqref{eqn:accuracy_each_pauli} becomes $2\varepsilon/3|E|$. The triangle inequality, applied to $|\tr(H \rho) - \nu|$, then provides additive-error accuracy of $\varepsilon$.
\end{proof}
It is much more important to understand the scaling (with respect to the size of the graph) of the multiplicative-error estimation of the energy.

\begin{lemma}\label{lem:shadows_multiplicative_requirements}
    Let $\varepsilon, \delta >0$. The classical shadows estimator $\nu$ for $\tr(H \rho)$ achieves
    \begin{equation}
        \mathbb{P} 
        \left(
            \left| 
                \frac{ \tr( H \rho) - \nu }
                    { \tr( H \cdot F(m^*)) }
            \right| 
            < \varepsilon
        \right)
        >
        1 - \delta
    \end{equation}
    provided that $S > \frac{2\cdot 3^4}{\varepsilon^2} \log ( 2 |E| / \delta )$.
\end{lemma}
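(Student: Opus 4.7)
The plan is to reduce the estimation of $\tr(H\rho)$ to the simultaneous estimation of the weight-2 observables $\{\tr(O_e\rho)\}_{e\in E}$ via the decomposition $H = \frac{|E|}{2}I - \frac{3}{2}\sum_{e\in E} O_e$, and then to convert an additive accuracy guarantee into the desired multiplicative one using the elementary lower bound $\tr(H\cdot F(m^*)) = \cut(m^*)\ge |E|/2$. This bound holds because a uniformly random configuration cuts half the edges in expectation, so the optimum must do at least as well.

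Concretely, writing $\nu = \frac{|E|}{2} - \frac{3}{2}\sum_{e\in E}\mu_e$, the triangle inequality gives
\begin{equation}
|\tr(H\rho) - \nu| \le \frac{3}{2}\sum_{e\in E}|\tr(O_e\rho) - \mu_e|.
\end{equation}
On the event of Lemma~\ref{lem:shadows_var_weight2}, namely $|\tr(O_e\rho)-\mu_e|<\kappa$ for every $e\in E$ simultaneously, this tightens to $|\tr(H\rho)-\nu| < \frac{3}{2}|E|\kappa$. Dividing by $\tr(H\cdot F(m^*))\ge |E|/2$ bounds the multiplicative error by $3\kappa$, so it suffices to choose $\kappa=\varepsilon/3$.

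Substituting $\kappa=\varepsilon/3$ into the sample bound of Lemma~\ref{lem:shadows_var_weight2} yields
\begin{equation}
S > \frac{2\cdot 3^2}{(\varepsilon/3)^2}\log(2|E|/\delta) = \frac{2\cdot 3^4}{\varepsilon^2}\log(2|E|/\delta),
\end{equation}
which is exactly the claimed bound. The overall failure probability $\delta$ is already controlled by the union bound internal to Lemma~\ref{lem:shadows_var_weight2}, so no further union step is needed here.

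The only non-routine ingredient is the additive-to-multiplicative conversion, which rests squarely on $\cut(m^*)\ge |E|/2$; in the MaxCutGain regime of Section~\ref{app:relaxation_small_expansion} this denominator would instead be of order $\varepsilon^*|E|$ and the sample complexity would degrade by a factor $1/(\varepsilon^*)^2$. Apart from that, the argument is a bookkeeping exercise tracking the factor of $3$ coming from the coefficient of $O_e$ in $H$ through Hoeffding's inequality and the $|E|$-fold union bound.
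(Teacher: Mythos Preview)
Your proof is correct and follows essentially the same route as the paper: invoke Lemma~\ref{lem:shadows_var_weight2} with $\kappa=\varepsilon/3$, apply the triangle inequality to $|\tr(H\rho)-\nu|\le \tfrac{3}{2}\sum_{e}|\tr(O_e\rho)-\mu_e|$, and convert to a multiplicative bound via $\tr(H\cdot F(m^*))\ge |E|/2$. The only cosmetic difference is that you write out $\nu=\tfrac{|E|}{2}-\tfrac{3}{2}\sum_e\mu_e$ explicitly, whereas the paper leaves this implicit.
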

\begin{proof}
Lemma~\ref{lem:shadows_var_weight2} and the requirement on $S$ imply that the estimators $\mu_e$ for $\tr(O_e\rho)$ satisfy, for all $e\in E$,
\begin{equation}\label{eqn:accuracy_each_pauli}
    \mathbb{P} 
    \left(
        \left| 
            \tr( O_e \rho) - \mu_e | < \frac{\varepsilon}{3} 
        \right|
    \right)
    >
    1 - \delta.
\end{equation}
Therefore, with probability greater than $1-\delta$, we have
\begin{equation}\label{eqn:intermediate_high_prob_bound}
    \frac{3}{|E|} \sum_{e\in E} 
    \left|
    \tr( O_e \rho )  - \mu_e 
    \right|
    < \varepsilon.
\end{equation}
Leaving this argument for the moment, consider $\tr(H \cdot F(m^*))$. We have the inequality
$\tr(H \cdot F(m^*)) = \cut(m^*) > |E|/2$ which is a classic result obtained by considering the MaxCut algorithm which randomly assigns the values of the binary variables associated with the vertices. Therefore
\begin{equation}\label{eqn:bound_max_energy}
    \frac{1}{\tr(H \cdot F(m^*))} \le \frac{2}{|E|}.
\end{equation}
The triangle inequality applied to the Hamiltonian in Eq.~\eqref{eqn:H} gives
$
| \tr(H \rho) - \nu | \le \frac32 \sum_{e\in E} | \tr(O_e \rho) - \mu_e |
$
which, combined with the previous display equation, provides
\begin{equation}
    \left| 
        \frac{ \tr( H \rho) - \nu }
            { \tr( H \rho_\rel) }
    \right|
    \le
    \frac{3}{|E|}
    \sum_{e\in E} | \tr(O_e \rho) - \mu_e |.
\end{equation}
Combining this inequality with Eq.~\eqref{eqn:intermediate_high_prob_bound} (which holds with probability greater than $1-\delta$) establishes the lemma.
\end{proof}
Note that the proof provides a result stronger than the announced lemma. Specifically, consider the bound in Eq.~\eqref{eqn:bound_max_energy}. This bound is true, in expectation, for $\tr(H\cdot F(m))$ provided $m$ is chosen uniformly at random. The multiplicative error, and hence number of measurements required, can therefore be understood for all stages of a VQE routine provided that the routine begins in a randomly-chosen encoded state $F(m)$ whence subsequent intermediate states $\rho$ ought have energy also above $|E|/2$.

\section{Recovery of cut value given embedded state}\label{app:recovery_cut_given_rho}

Suppose we have a state $\rho$ which is guaranteed to be an encoded state for some $m$. That is, $\rho=F(m)$; however, we have no direct access to $m$. Our goal is to understand how many copies of $\rho$ we require in order to accurately estimate $\cut(m)$.
We maintain the notation of Supplementary Information~\ref{app:variance_classical_shadows}. That is, we have estimators $\mu_e$ using the classical shadows estimation routine such that $\E(\mu_e) = \tr(O_e \rho)$ for all weight-2 Pauli operators $O_e$ present in the Hamiltonian $H$.

Hoeffding's inequality allows us to understand the probability that $\tr(O_e \rho)$ and $\mu_{e}$ differ by $\kappa$ additive error, given in Eq.~\eqref{eqn:hoeffding_inside_proof}. Since we assume that $\rho=F(m)$ for some unknown configuration $m$, we are guaranteed that $\tr(O_e\rho)=\pm \frac1{\sqrt3}\frac1{\sqrt3}$ hence we need to estimate $\tr(O_e \rho)$ to within $\kappa=\frac13$ additive error. This accuracy will then allow us to determine the parity of the binary variables associated with the vertices of the edge $e$. Moreover, we want to have a $1-\delta$ success-rate that \emph{all} Pauli operators $O_e$ provide the correct parities. A union bound solves this problem. It follows that the condition on the number of measurements $S$, for a $1-\delta$ success-rate of correctly estimating $\cut(m)$ is
\begin{equation}\label{eqn:shots_lower_bound}
    S 
    >
    {2\cdot 3^{ 4 }}
    \cdot
    \log\left( \frac{2|E|}{\delta} \right).
\end{equation}
The condition in Eq.~\eqref{eqn:shots_lower_bound} establishes a lower bound on the number of measurements required.

\section{Alternative relaxations}\label{app:deformation_family}

This section explains how to see our algorithm as providing a family of deformations of algorithms. One instance of this family of deformations coincides with the well-known quantum approximate optimization algorithm \cite{farhi2014qaoa}. Although it is possible to go directly from our algorithm to QAOA, it is attractive to provide an intermediary stage of the deformation. To this end, we introduce a second family of quantum random access codes. We then recall and unify some notation. Finally, we explain the deformation.

Consider the encoding of two binary variables $m=\{m_i\}_{i\in[2]}$ into a qubit $\D(\C^2)$
\begin{equation}
    f^{(2)}(m) 
    = 
    \frac12\left( I + \frac{1}{\sqrt2}\left(m_1 X + m_2 Z\right) \right)
\end{equation}
and denote by $\pi_P^\pm$ the projections onto the $\pm1$ eigenspaces of the Pauli operators $P\in\{X,Y,Z\}$. Measuring in the $X$, $Z$ bases allows us to, respectively, recover the variables $m_1$, $m_2$. For example, $\tr(\pi_X^+ \cdot f^{(2)}(+1, m_2) = \frac12 + \frac1{2\sqrt2}$. This provides the $(2,1,p^{(2)})$-QRAC encoding where $p^{(2)}= \frac12 + \frac1{2\sqrt2}$~\cite{ANTV02,Hayashi2006}. For quantum rounding, set $\xi_1^\pm = \frac12(I\pm\frac1{\sqrt2}(X+Z))$ and set $\xi_2^\pm=X\xi_1^\pm X$. Then $\xi_1^\pm$ are the eigenstates associated with the operator $\frac1{\sqrt2}(X+Z)$ while $\xi_2^\pm$ are associated with $\frac1{\sqrt2}(X-Z)$. See Fig~\ref{fig:three_encodings}b. 

For notational consistency, we shall write $f^{(3)}$ for the map in Eq.~\eqref{eqn:f_for_F}. Let us also write, for the encoding of a single variable $m=\{m_i\}_{i\in[1]}$ into a qubit $\D(\C^2)$, the map $f^{(1)}(m) = \frac12(I+m_1 Z)$. Note that $\tr(\pi_Z^+ \cdot f^{(1)}(+1)) = 1$ whence $f^{(1)}$ may be seen as a $(1,1,1)$-QRAC.

We unify the notation for three measurement procedures. Let $\M^{(1)}$ be the quantum channel on a single qubit associated with measuring in the $Z$-basis. Let $\M^{(2)}$ be the quantum channel on a single qubit associated with uniformly at random selecting a measurement basis $\{\xi_i^+,\xi_i^-\}_{i\in[2]}$ and subsequently measuring the qubit in the chosen basis. Let $\M^{(3)}$ be the quantum channel given in Supplementary Information~\ref{app:measurement} associated with randomly measuring in magic bases.

Let $G=(V,E)$ be a graph and let $V=\cup_{c\in[C]}V_c$ be a partitioning, or coloring, of the graph into $C$ partitions such that any two vertices which share an edge are not contained in the same partition. Fix an instance of the deformation $d\in\{1,2,3\}$. To each partition, $V_c$, associate $n_c$ qubits where 
$n_c=\left\lceil |V_c| / d \right\rceil$. 

The vertices in $V_c$ are then each associated with a unique weight-1 Pauli operator supported on the $n_c$ qubits. For the deformation $d\in\{1,2,3\}$, this Pauli operator is one of the $d\cdot n_c$ possible operators which are respectively tensor products of the single-qubit Pauli operators $\{I,Z\}$, $\{I,X,Z\}$, $\{I,X,Y,Z\}$. We write $P_{c,i}$ for the Pauli operator associated with a vertex $v_{c,i}$, where $i\in[|V_c|]$. To each edge $e$ with associated vertices $v_{c,i},v_{c',i'}$, we declare the weight-2 Pauli operator $O_e = P_{c,i}P_{c',i'}$ and we define the Hamiltonian:
\begin{equation}
    H^{(d)}
    =
    \sum_{e\in E}
    \frac12
    \left(
        I 
        - d\cdot
        O_e
    \right).
\end{equation}
We leave as an exercise the corresponding embeddings $F^{(d)}$ such that we obtain a commutative diagram providing $\tr(H^{(d)}\cdot F^{(d)}(m))=\cut(m)$ for all $m\in\{-1,1\}^{|V|}$.

We continue to consider a fixed deformation $d\in\{1,2,3\}$ and consider the magic state rounding procedure. Let $\rho_g^{(d)}$ denote a maximum-energy eigenstate of the Hamiltonian $H^{(d)}$. The quantum rounding procedure, when supplied with $\rho_g^{(d)}$ applies $\M^{(d)}$ on all qubits. See Figure~\ref{fig:three_encodings}. Let $\gamma^{(d)}$ denote the approximation ratio. We consider the three cases separately for clarity:
\begin{itemize}
    \item The deformation $d=1$ has a Hamiltonian which is diagonal in the computational basis whence the ground state is also diagonal in the computational basis. In fact, this setup is analogous to QAOA and $\rho_g^{(1)}$ is a linear combination of states $F^{(1)}(m^*)$ for all $m^*$ solving the MaxCut problem. The procedure $\M^{(1)}$ does not cause a degradation in the energy of the rounded state. Rather, it simply collapses $\rho_g^{(1)}$ into one instance $F^{(1)}(m^*)$. We therefore find that $\gamma^{(1)}=1$. 
    \item Consider the map $\M^{(2)}$ on a single qubit. In expectation, this map can be expressed as the linear operation $\mathcal{E}^{(2)}$ such that 
    $\mathcal{E}^{(2)}(I)=I$, 
    $\mathcal{E}^{(2)}(Y)=0$,
    and
    $\mathcal{E}^{(2)}(P)=\frac12P$
    for $P\in\{X,Z\}$.
    Proceeding in precisely the same was as that given in Supplementary Information~\ref{app:relaxation}, we obtain an inequality of the form Eq.~\eqref{eqn:one_over_nine_linearity} however the prefactor becomes $1/4$ rather than $1/9$. The end result is an approximation ratio which, in expectation, is bounded by $5/8$. That is, $\E(\gamma^{(2)})\ge 5/8$.
    
    \item The deformation $d=3$ provides the bound in expectation of the approximation ratio $\E(\gamma^{(3)})\ge5/9$.
\end{itemize}

\begin{figure}[ht]
    \centering
    \includegraphics[width=0.9\linewidth]{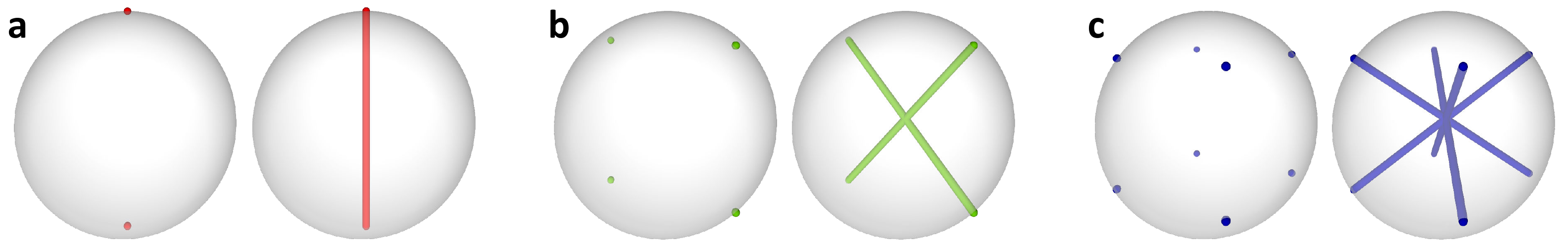}
    \caption{Three different encodings, the states and the measurement bases, of variables into a single qubit. \textbf{(a)} One variable per qubit. \textbf{(b)} Two variables per qubit. \textbf{(c)} Three variables per qubit.}
    \label{fig:three_encodings}
\end{figure}

\begin{remark}
The partition of the vertices is superfluous for the deformation $d=1$. Indeed, since we associate each vertex with an individual qubit, we are guaranteed that the two weight-1 Pauli operators $P_{c,i}$, $P_{c',i'}$, used in the construction of the weight-2 Pauli operator $O_e$, are supported on different qubits.
\end{remark}
\begin{remark}
This deformation may be written in an even more granular fashion. Individual partitions of the vertices may be associated with different deformation parameters $d$. Also different encodings can be used for the same partition. For example, if a color leads to a collection $V_c$ of vertices such that $V_c=3a+b$ with $b\in\{1,2\}$, then we could encode $3a$ vertices into $a$ qubits, using $f^{(3)}$, and the remaining $b$ vertex/vertices could be encoded into a single qubit using the map $f^{(b)}$.
\end{remark}

\begin{remark}
In the limit of graphs with large degree, such as in the Erdős–Rényi model of graphs with fixed probability of an edge between two vertices, our main algorithm will require a coloring with many colors. As the number of vertices associated with each color becomes small, say two vertices per color, or eventually one vertex per color, it is reasonable to use the encoding provided by the $(2,1,0.85)$-QRAC, and eventually the QAOA encoding of one variable into one qubit.
\end{remark}

\section{Weighted MaxCut for an application of aerospace design}\label{app:ply}

In this section we explain how a design problem relevant for the aerospace industry is mapped into a weighted MaxCut graph, which is then addressed by the algorithms proposed. 
In aerospace applications, the use of ply composites offers a significant strength-to-weight advantage for structural components, and have been successfully realized in airframe construction including fuselage, wings, empennage, and flight control surfaces of recent aircraft. Due to the large number of ordering possibilities and complicated nature of the design rules disallowing certain permutations, the optimized arrangement and ordering of the fiber angles comprising the composite stacking sequence poses a computationally challenging combinatorial problem.

Ply composites are composed of plies with varying fiber orientations, which offer a total behavior greater than the sum of the individual plies. The physical and structural properties of the plies depend on the relative fractions of the orientation angles comprising the composite, and therefore the ply percentages (defined as the fractions of each fiber orientation relative to the total number of plies) are often selected to meet a specific structural design requirement. In addition to the structural properties, the ordering of the sequences of plies, referred to as the stacking sequence, also greatly affects the behavior of the composite. There are particular well-known orderings or permutations of plies that can render the composite structurally unsound or bestow it with undesirable properties, such as delaminating or warping. The set of undesirable permutations may be represented by a set of design rules, which specify disallowed sets or orderings that may occur through the stacking sequence. The problem of optimizing ply composites is a challenging problem partly due to the nature and intricacy of these various stacking sequence rules that must be obeyed to ensure a high-quality composite structure.

While specific design rules depend on the composite structure, several common rules are well known~\cite{liu2011bilevel,irisarri2009multiobjective,niu1992composite}. The design rules in Tab.~\ref{table:design-rules} represent some typical examples applied to the set of traditional laminates (consisting of $0$, $\pm45$, and $90$ degree fiber orientations). It is desirable that any composite part be manufactured in accordance to these rules to ensure that structural properties are met and that a high-quality part is produced.

\begin{table}[H]
\centering
\begin{tabular}{c | l} 
Rule & ~Description \\ [0.5ex] 
\hline\hline
 1 & ~Stacking sequence should be balanced (equal number of $+45$ and $-45$ plies) \\ 
 2   & ~Stacking sequence should be symmetric about the midplane, if possible \\
 3 & ~Number of plies in any orientation placed sequentially shall be limited to a maximum (typically four)  \\
 4  & ~A 90-degree change of angle between two adjacent plies is to be avoided
\end{tabular}
\caption{Examples of typical design rules applied to a set of traditional laminates consisting of discrete fiber orientations.}
\label{table:design-rules}
\end{table}

A complete demonstration of a solution to a full stacking sequence optimization problem is beyond the scope of the present work. However, a demonstration of the effectiveness of the quantum method for aerospace problems such as stacking sequence optimization can be achieved by using a simplified yet practically motivated model. In this approach, we determine ordered sets of composite angle sequences of a fixed number of total plies that best minimize the total number of stacking sequence design rule violations.

\begin{figure}[t]
    \centering
    \includegraphics[width=0.85\linewidth]{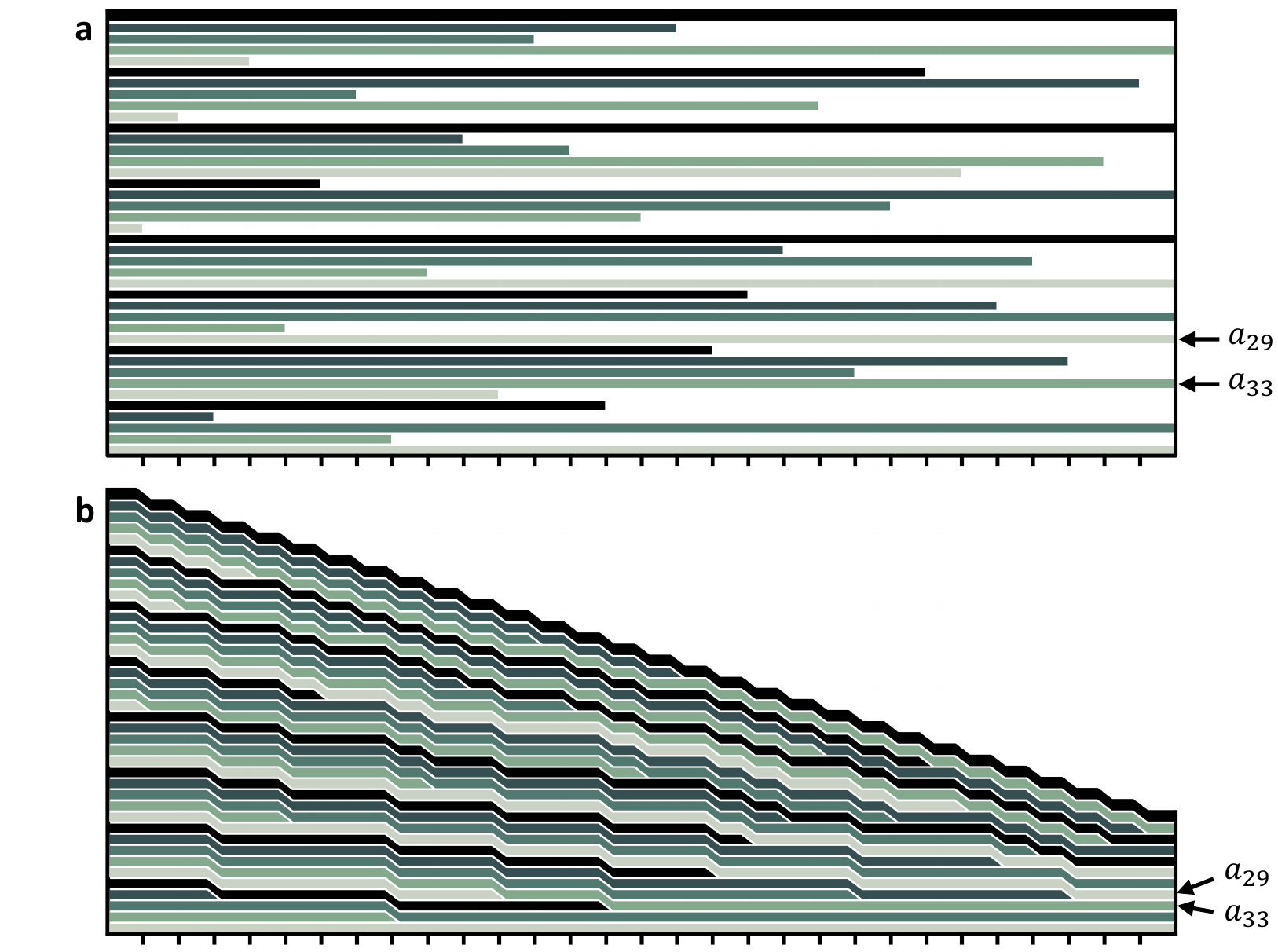}
    \caption{A composite structure shown in profile with $N=40$ plies. Each row corresponds to a ply with fibers oriented at one of four angles $a_i \in \{0^\circ, \pm45^\circ, 90^\circ\}$ relative to an external reference frame. Finding a sequence of fiber angles adhering to design rules is a challenging computational problem. \textbf{(a)} Plies that do not span the full length of the structure are given placeholders, shown in white, that fill the remaining length of the row. \textbf{(b)} With the placeholders removed, plies drop down and the height of the structure is tapered towards the rightmost edge. Plies that were not vertically adjacent in \textbf{(a)} can become partially adjacent in \textbf{(b)}, leading to non-local constraints in the mathematical formulation of the problem.}
    \label{fig:ply-structure}
\end{figure}

Let's consider a composite structure with $N=40$ plies, shown in Fig.~\ref{fig:ply-structure}a in profile with colored rows. Each row $i$ is associated with ply fibers oriented at one of four possible angles of rotation in the plane of the ply, $a_i \in \{0^\circ, \pm45^\circ, 90^\circ\}$, relative to an external reference frame. Note that the plies in Fig.~\ref{fig:ply-structure} are colored in a repeating pattern to guide the eye and not necessarily to indicate a repeating pattern of fiber angles. The design task for the composite structure is to find a good sequence of angles $a_0, a_1, \ldots, a_{N-1}$ that adheres to a set of design rules. Here, we consider design rules that specify angle changes between pairs of plies that should be avoided. Given four angles, there are 16 different two-angle sequences, from which we choose the set of 8 disallowed sequences:
\begin{equation}\label{eq:disallowed}
    \mathcal{D} = \{(0^\circ, 90^\circ), (90^\circ, 0^\circ), (\mp45^\circ, \pm45^\circ), (0^\circ, 0^\circ), (90^\circ, 90^\circ), (\pm45^\circ, \pm45^\circ)\}.
\end{equation}
The first three entries in this set correspond to the well-known design rules restricting ply fiber orientations to no more than $45^\circ$ transitions between two adjacent plies (cf. Rule 4 in Tab.~\ref{table:design-rules}). The latter three entries can be interpreted as restrictions on the number of consecutive plies of identical orientation. Although, in general, design rules may allow up to four consecutive plies of the same orientation (Rule 3 in Tab.~\ref{table:design-rules}), the last three entries here permit a maximum of only one, which can encourage uniform distribution of the angles in a smaller stack. The remaining set of 8 two-angle sequences $\{(0^\circ, \pm45^\circ), (\pm45^\circ, 0^\circ), (\pm45^\circ, 90^\circ), (90^\circ, \pm45^\circ)\}$ are permitted transitions between ply pairs. 

In general, we consider disallowed sequences between ply pairs that may not be vertically adjacent in a given stack. However, these non-adjacent plies are assumed to be connected by a set of contiguous blanks, or placeholders, shown in white in Fig.~\ref{fig:ply-structure}a. Placeholders represent the absence of a physical ply in a stack and can be used, for example, to account for tapering of the composite structure along one direction as illustrated in Fig.~\ref{fig:ply-structure}b. As a result, plies that are non-adjacent in Fig.~\ref{fig:ply-structure}a become adjacent in Fig.~\ref{fig:ply-structure}b once the placeholders are removed and plies drop down. Although traditionally the placeholder blank locations are treated as additional variables to determine the best ply drop locations, here we use a simplified representation that considers a fixed, predetermined set of placeholder locations that are held constant. Future work could explore problem formulations that introduce placeholder locations as variables. In the mathematical formulation we'll discuss shortly, placeholders lead to non-local constraints between decision variables, which may increase the difficulty of finding a good solution. 

The ply composite design challenge can be formulated as a discrete optimization problem to which we apply Algorithm~\ref{alg:maxcut_with_magic} described in Supplementary Information~\ref{app:algorithm}. We start by representing the angle $a_i$ of the $i$th ply by a pair of binary variables $y_i, x_i \in \{0,1\}$ shown in Tab.~\ref{table:binary-variables}. It is sufficient to model the full set of disallowed angle sequences in Eq.~\eqref{eq:disallowed} by comparing the value of the second bit $x_i$ and $x_j$ of the pair of plies $i$ and $j$. For example, $x_i=x_j=1$ indicates the disallowed sequences $(\mp45^\circ, \pm45^\circ)$ and $(\pm45^\circ, \pm45^\circ)$, while $x_i=x_j=0$ captures the remaining sequences in $\mathcal{D}$. The total number of binary variables $x_i$ needed to model the full problem is therefore $N$, one for each ply. Modeling constraints in this way means that the first bits $y_i$ are not needed in our formulation, implying there may be multiple angle assignments for the full structure that obey the constraints. Our formulation could be moved closer still to practical industrial applications by, for example, targeting specific angle percentages for the solution stack or including additional design rules. In particular, design rules such as balancing of the $\pm45^\circ$ angles in the stack (Rule 1 in Tab.~\ref{table:design-rules}), are not addressed here but would be of interest to explore in future work and may serve to further narrow the space of unique solutions.

\begin{table}[h!]
\centering
\begin{tabular}{r | c} 
 $a_i$ & $y_i x_i$ \\ [0.5ex] 
 \hline\hline
 $0^\circ$ & 00 \\ 
 $45^\circ$   & 01 \\
 $90^\circ$ & 10  \\
 $-45^\circ$  & 11
\end{tabular}
\caption{Four possible fiber angles $a_i$ of the $i$th ply encoded as a pair of binary variables $y_i x_i$.}
\label{table:binary-variables}
\end{table}

Mathematically, the disallowed sequences in Eq.~\eqref{eq:disallowed} can be represented by linear equality constraints on the binary variables $x_i, x_j$, which can be converted to a quadratic term $C_{ij}$ of the form,
\begin{equation}
    x_i = x_j \longrightarrow C_{ij} = (x_i - x_j)^2 = x_i + x_j - 2 x_i x_j, \qquad C_{ij} \in \{0,1\}.
\end{equation}
The presence of disallowed sequences leads to $C_{ij}=0$, while the allowed sequences yield $C_{ij}=1$. We assume we are given a set of unique constraints $\mathcal{C} = \{ C_{ij} \}$, which is equivalent to specifying a graph $G(E,V)$ with edges $E$ and vertices $V$. We build the corresponding cost function $F(x)$, with $x$ the set of decision variables $x_i$ appearing in $\mathcal{C}$, by summing over the set of constraints:
\begin{equation}
    F(x) = \sum_{i,j \in \mathcal{C}} \omega_{ij} C_{ij} = \sum_{i,j \in \mathcal{C}} w_{ij} (x_i - x_j)^2  = \sum_{i,j \in \mathcal{C}} \omega_{ij} (x_i + x_j - 2 x_i x_j).
\end{equation}
Here, $\omega_{ij}$ allows for weighting each constraint $C_{ij}$. We then look for angle configurations minimizing the number of design rule violations by maximizing the cost function over $x$:
\begin{equation}
    \max_x F(x).
\end{equation}
Transforming $x_i \rightarrow (1-z_i)/2$, where $z_i \in \{-1,1\}$, we see this problem is equivalent to a weighted MaxCut:
\begin{equation}
    \max_z F(z) =  \max_z \frac12 \sum_{(i,j) \in E} \omega_{ij} (1-z_i z_j),
\end{equation}
where $E$ is the set of edges in the graph associated with the set of constraints $\mathcal{C}$. Algorithm~\ref{alg:maxcut_with_magic} in Supplementary Information~\ref{app:algorithm} can be directly applied to find optimized angle sequences.

The problem instance we address here is generated by a set of local and non-local constraints for a stack of $N=40$ plies. Each vertically adjacent pair of plies in the stack is given a constraint, leading to a linear chain of $N-1=39$ constraints: $\{C_{i,i+1}\}_{i=0}^{N-2}$. In addition, all pairs of plies that are not vertically adjacent in Fig.~\ref{fig:ply-structure}a, but are separated by a contiguous set of placeholders, are given non-local constraints $C_{i,j>i+1}$. See, for example, plies 29 and 33 in Fig.~\ref{fig:ply-structure}, which become adjacent once the placeholders are removed. We generate 29 such non-local constraints according to the placeholder locations shown in Fig.~\ref{fig:ply-structure}a. For each constraint, local or non-local, we add an integer weight $\omega_{ij}$ equal to the length of the interface between the corresponding plies $i,j$, as measured by the tick marks on the lower axes of Fig.~\ref{fig:ply-structure}. For example, plies 29 and 33 have an interface of length 3 in Fig.~\ref{fig:ply-structure}b and, therefore, constraint $C_{29,33}$ is weighted by an amount $\omega_{29,33}=3$. The full adjacency list is shown in Eq.~\eqref{eq:adjacency-ply}. 

The maximum possible cost function value for this problem instance is 641. This solution is illustrated in Fig.~\ref{fig:ply-structure-solutions}b,d for the planar graph and ply composite structure, respectively. The $i$th node of the graph represents the $i$th ply in the stack. Edge thickness is proportional to the constraint weight and edge color denotes satisfied (black) or unsatisfied (blue) constraints. The corresponding ply composite structure is shown underneath with unsatisfied constraints indicated by blue lines at the interfaces of the corresponding plies. We note that a trivial solution is found by alternating the ply bit values $x_i$ (e.g., $x=0101...01$), which satisfies all of the local constraints and reaches a cost function value of $F(0101\ldots01) = 503$. 

Applying Algorithm~\ref{alg:maxcut_with_magic} to this problem instance reduces the number of qubits needed from 40 to 15 (a compression factor of 2.66 compared to a maximum possible of 3.0) and yields a best obtained solution of 617 on quantum hardware. This solution is illustrated in Fig.~\ref{fig:ply-structure-solutions}a,b (and also in Fig.~\ref{Figure3}d). See Secs.~\ref{app:numerics} and~\ref{app:experiments} for more information about the numerics and experiments.

\begin{figure}[t]
    \centering
    \includegraphics[width=0.85\linewidth]{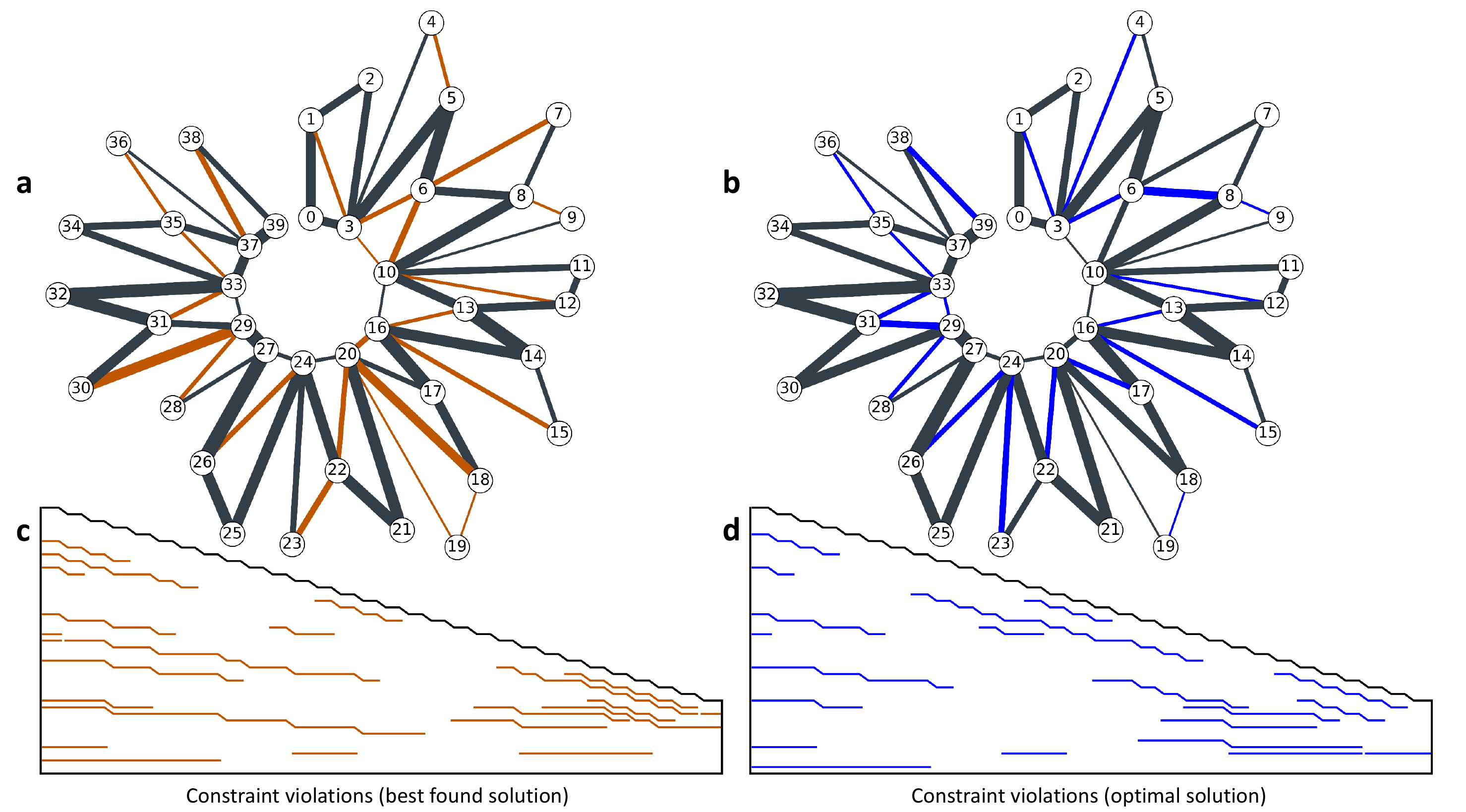}
    \caption{Solutions for the ply composite problem, a 40-node weighted MaxCut instance. Planar graphs for \textbf{(a)} the solution from a hardware experiment and \textbf{(b)} the optimal solution. Each node corresponds to a ply, with the number indicating position in the stack. Edge thickness is proportional to the constraint weight and edge color denotes satisfied (black) or unsatisfied (orange or blue) constraints. Corresponding locations of unsatisfied constraints in the ply composite structure for both solutions are shown in \textbf{(c)} and \textbf{(d)} with colored lines at the interface of the associated ply pairs.}
    \label{fig:ply-structure-solutions}
\end{figure}

\section{Numerical experiments}\label{app:numerics}

All MaxCut problem instances used both in simulation and on hardware were modeled using the \textsf{docplex} interface provided by Qiskit Optimization~\cite{aleksandrowicz2019qiskit}. These \textsf{docplex} models were then solved classically using IBM ILOG CPLEX~\cite{cplex}, then converted into a non-diagonal Hamiltonian, $H$ as given in Eq.~\ref{eqn:H} and represented internally as a Qiskit \textsf{opflow} object. Simulations that use the exact relaxed solution, such as those used to generate Fig~\ref{Figure1}e, were solved using Qiskit's \textsf{NumpyMinimumEigensolver} and were then rounded by directly computing Tr$\left(H \cdot \rho_g\right)$ or by sampling from $\rho_g$ in random magic bases.  Simulations that include solving for an approximate relaxed solution were done using Qiskit's implementation of the Variational Quantum Eigensolver (VQE). For these VQE simulations the COBYLA optimizer was used with \textsf{maxiter=25,000}, \textsf{rhobeg=1.0}, and \textsf{tol=None}. The same hardware-efficient ansatz used during our hardware experiments was used for all VQE simulations. See Supplementary Information~\ref{app:experiments} for further details. 

The adjacency list for all problem instances executed on hardware are included here. 
$G_{16}$ and $G_{40}$ correspond to the 16- and 40-node unweighted graphs used in Figs~\ref{Figure2} and~\ref{Figure3}a. $G^W_{40}$ is the weighted graph corresponding to the 40-node ply composite problem in Fig.~\ref{Figure3}d. The adjacency lists are provided in the form ($m_i$, $m_j$, $w_{ij}$).

\begin{equation}
\begin{split}
    G_{16} = [& (0, 4, 1), (0, 12, 1), (0, 15, 1), (1, 2, 1), (1, 13, 1), (1, 14, 1), (2, 12, 1), (2, 13, 1),\\
    & (3, 5, 1), (3, 7, 1), (3, 11, 1), (4, 9, 1), (4, 14, 1), (5, 8, 1), (5, 10, 1), (6, 10, 1),\\
    & (6, 11, 1), (6, 12, 1), (7, 11, 1), (7, 15, 1), (8, 9, 1), (8, 10, 1), (9, 14, 1), (13, 15, 1)]
\end{split}    
\end{equation}

\begin{equation}
\begin{split}
    G_{40} = [& (0, 15, 1), (0, 18, 1), (0, 34, 1), (1, 9, 1), (1, 15, 1), (1, 35, 1), (2, 3, 1), (2, 5, 1),\\
    & (2, 25, 1), (3, 17, 1), (3, 38, 1), (4, 35, 1), (4, 37, 1), (4, 39, 1), (5, 21, 1), (5, 26, 1),\\
    & (6, 22, 1), (6, 23, 1), (6, 26, 1), (7, 12, 1), (7, 29, 1), (7, 35, 1), (8, 11, 1), (8, 28, 1),\\
    & (8, 33, 1), (9, 20, 1), (9, 24, 1), (10, 21, 1), (10, 27, 1), (10, 39, 1), (11, 21, 1), (11, 31, 1),\\
    & (12, 23, 1), (12, 39, 1), (13, 25, 1), (13, 29, 1), (13, 30, 1), (14, 28, 1), (14, 31, 1), (14, 32, 1),\\
    & (15, 18, 1), (16, 17, 1), (16, 19, 1), (16, 33, 1), (17, 23, 1), (18, 24, 1), (19, 24, 1), (19, 25, 1),\\
    & (20, 27, 1), (20, 32, 1), (22, 27, 1), (22, 37, 1), (26, 36, 1), (28, 34, 1), (29, 36, 1), (30, 34, 1),\\
    & (30, 38, 1), (31, 37, 1), (32, 33, 1), (36, 38, 1)]
\end{split}
\end{equation}

\begin{equation}\label{eq:adjacency-ply}
    \begin{split}
        G^{W}_{40} = [&(0, 1, 16),(1, 2, 12),(2, 3, 12),(3, 4, 4),(4, 5, 4),(5, 6, 23),(6, 7, 7),(7, 8, 7),\\
        & (8, 9, 2),(9, 10, 2),(10, 11, 10),(11, 12, 10),(12, 13, 13),(13, 14, 24),(14, 15, 6),(15, 16, 6),\\
        &(16, 17, 22),(17, 18, 15),(18, 19, 1),(19, 20, 1),(20, 21, 19),(21, 22, 19),(22, 23, 9),(23, 24, 9),\\
        & (24, 25, 18),(25, 26, 18),(26, 27, 25),(27, 28, 5),(28, 29, 5),(29, 30, 17),(30, 31, 17),(31, 32, 21),\\
        & (32, 33, 21),(33, 34, 11),(34, 35, 11),(35, 36, 3),(36, 37, 3),(37, 38, 8),(38, 39, 8),(18, 20, 14),\\
        & (8, 10, 18),(35, 37, 11),(3, 5, 19),(27, 29, 25),(14, 16, 18),(6, 8, 13),(37, 39, 22),(22, 24, 17),\\
        & (10, 12, 3),(33, 35, 3),(1, 3, 4),(10, 13, 15),(33, 37, 16),(17, 20, 7),(0, 3, 14),(29, 31, 10),\\
        & (24, 26, 7),(20, 22, 7),(6, 10, 9),(31, 33, 6),(16, 20, 8),(3, 6, 6),(13, 16, 4),(24, 27, 5),\\
        & (20, 24, 4),(29, 33, 3),(10, 16, 2),(3, 10, 1)]        
    \end{split}
\end{equation}

\section{Experiments on quantum Hardware}\label{app:experiments}

\begin{figure}[t]
    \centering
    \includegraphics[width=0.65\linewidth]{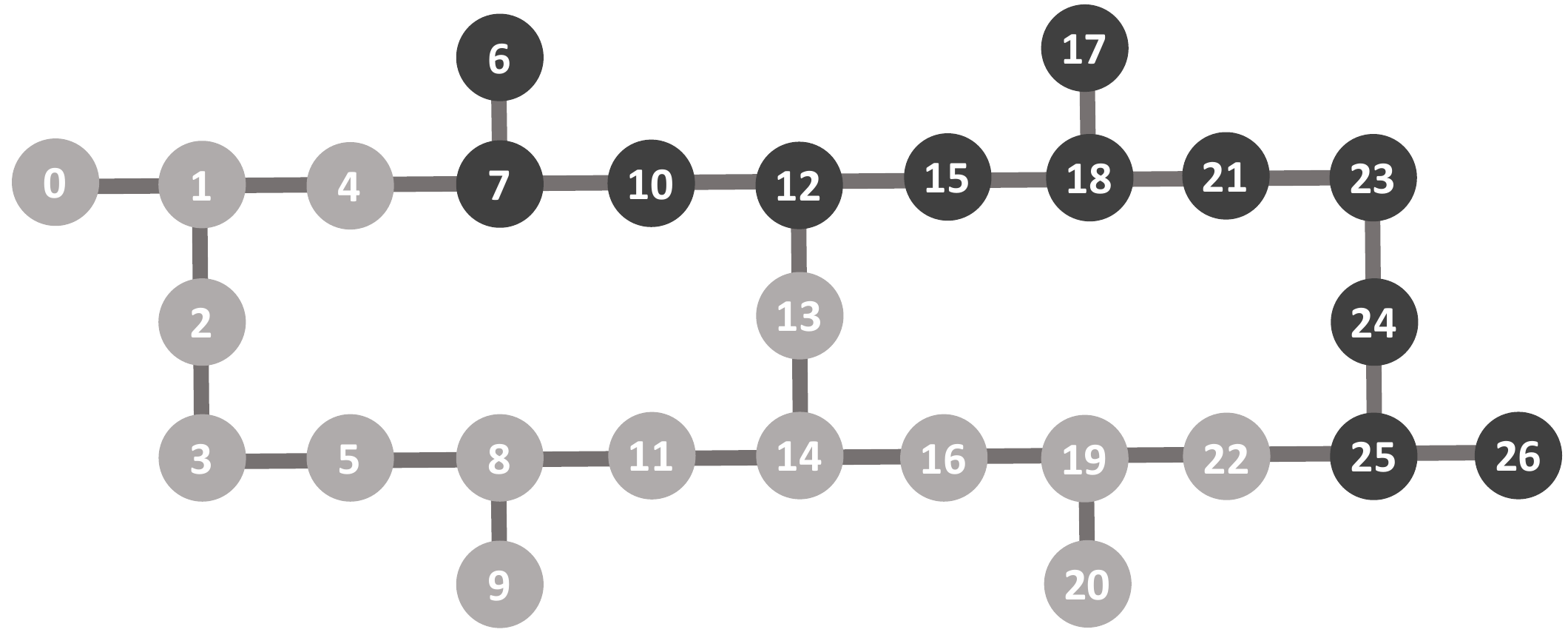}
    \caption{Connectivity of the 27-qubit IBM Quantum device \emph{ibmq\_dublin}.}
    \label{fig:dublin-layout}
\end{figure}

All hardware experiments were run on the 27-qubit device \emph{ibmq\_dublin} (Quantum Volume 64) with connectivity shown in Fig.~\ref{fig:dublin-layout}. The qubits used for a given experiment were chosen according to the device noise. That is, we generated a list $Q_\text{Dublin}$ of the device's qubits such that the first $2\leq k \leq 27$ qubits form a contiguous subset of its connectivity graph (Fig.~\ref{fig:dublin-layout}). Then, the ordering of $Q_\text{Dublin}$ was determined heuristically to prioritize using the least noisy qubits.

\begin{equation}\label{eq:qubit-order-dublin}
    Q_\text{Dublin} =  [8,11,9,5,3,14,13,16,19,20,22,2,1,0,4,7,10,12,15,18,21,25,26,17,6,24,23]
\end{equation}

The variational circuit we use is a depth-$l$ hardware-efficient variational ansatz introduced in Ref.~\cite{kandala2017hardware} with $l$ layers of arbitrary single-qubit rotations interleaved with $l-1$ entangling blocks of controlled-$Z$ gates. Note that a depth-$1$ instance of this ansatz yields a product state, while $l>1$ produces entangled states. For a circuit with $n$ qubits, our ansatz uses the contiguous subset of qubits given by the first $n$ elements of Eq.~\eqref{eq:qubit-order-dublin}. Within each repeated block of the ansatz, the controlled-$Z$ gates are applied once between each contiguous pair of qubits. For example, the 40-node MaxCut instances of Fig.~\ref{Figure3} were run on the qubits indicated with light gray in Fig.~\ref{fig:dublin-layout}. 

VQE experiments run on hardware were optimized using the Simultaneous Perturbation Stochastic Approximation (SPSA) optimizer with 500 iterations. 
We employ single-qubit readout error mitigation by tensoring together measurement filters computed on each qubit. This requires only a linear number of calibration circuits at the cost of not accounting for correlations between qubits. A modified version of Qiskit's \textsf{TensoredMeasFitter} was used to implement this scheme. 

\newpage

\end{document}